\documentclass[twocolumn,secthm]{autart}

\usepackage{graphicx}
\usepackage{url}
\usepackage{amsmath,amssymb}
\usepackage{bm}
\usepackage{mathtools}
\usepackage{algorithm}
\usepackage{algpseudocode}
\usepackage{enumitem}

\newtheorem{theorem}[thm]{Theorem}
\newtheorem{assumption}[thm]{Assumption}
\newtheorem{remark}[thm]{Remark}
\newtheorem{lemma}[thm]{Lemma}
\newtheorem{corollary}[thm]{Corollary}
\newtheorem{definition}[thm]{Definition}
\newenvironment{proof}{\begin{pf*}{Proof.}}{\hfill$\blacksquare$\end{pf*}}

\newcommand{\cK}{\mathcal{K}}

\newcommand{\Kop}{\mathcal{K}}

\newcommand{\bGamma}{\boldsymbol{\Gamma}}

\begin{document}

\begin{frontmatter}

\title{A Koopman--Polynomial Chaos Framework for SMPC of Nonlinear
Stochastic Processes\thanksref{footnoteinfo}}

\thanks[footnoteinfo]{$^*$Authors contributed equally.
$^\dagger$Corresponding author:
% Liang Wu, wliang14@jh.edu
Richard D. Braatz, braatz@mit.edu
.}
\thanks[conferenceinfo]{A preliminary version of this work will be presented at the
2026 IEEE Conference on Decision and Control (CDC).}
\author[MIT]{Wallace Gian Yion Tan$^*$}
\author[MIT]{Efstathios Iliakis$^*$}
\author[JHU]{Liang Wu}
\author[JHU]{J\'an Drgo\v{n}a}
\author[MIT]{Richard D. Braatz$^\dagger$}\ead{braatz@mit.edu}

\address[MIT]{Massachusetts Institute of Technology, MA 02139, USA.}
\address[JHU]{Johns Hopkins University, MD 21218, USA.}

\begin{keyword}
Koopman operators; Polynomial chaos expansions; Stochastic model predictive control; Nonlinear stochastic systems.
\end{keyword}

\begin{abstract}
This paper develops a polynomial parametric Koopman operator (PPKO) framework for stochastic model predictive control (SMPC) of nonlinear systems with parametric uncertainty and time-varying stochastic disturbances. The Koopman operator is parameterized by polynomial chaos expansions (PCEs), enabling data-driven nonlinear dynamics to be integrated with PCE-based uncertainty propagation and control. We show that the polynomial parameterization introduces no irreducible approximation gap beyond PCE truncation. We then derive a condensed deterministic SMPC reformulation that handles both time-invariant parametric uncertainty and time-varying disturbances while retaining an online optimization dimension that depends only on the control horizon and input dimension, independent of the lifted-state and PCE dimensions. The resulting formulation yields convex programs with expectation and second-moment constraints that can be solved using standard optimization methods. Numerical studies on the Duffing oscillator and the electrothermal cooling system demonstrate the accuracy, scalability, and closed-loop performance of the proposed framework.
\end{abstract}

\end{frontmatter}

\section{Introduction}
Many real-world control systems operate under significant uncertainties arising from uncertain physical parameters and varying operating conditions. Some common examples include chemical reaction systems with uncertain kinetic and heat transfer coefficients \cite{ruppen1995optimization}, mechanical systems with uncertain mass or stiffness parameters \cite{mammarella2018offline}, and fluctuations in ambient temperatures in building control \cite{hewing2020recursively}. In such scenarios, stochastic model predictive control (SMPC) has emerged as a principled approach to handle uncertainty by explicitly incorporating probabilistic models of uncertain parameters and stochastic disturbances into the predictive control framework \cite{mesbah2016stochastic}. 

A fundamental challenge in the practical implementation of SMPC is the need to propagate uncertainty through the system dynamics and evaluate expectations appearing in the cost and constraint functions. When the underlying system dynamics are nonlinear, the resulting SMPC problem is typically formulated as a nonlinear program (NLP), which must be solved at each sampling time. Such NLPs are generally computationally expensive and difficult to integrate with real-time optimization in practical applications. In contrast, for stochastic systems with linear dynamics, a rich set of computationally efficient methods exists in the literature, including approaches based on polynomial chaos expansions (PCEs) \cite{tan2025offset,dai2016distributed,bernardini2011stabilizing,mishra2024polynomial,ou2025polynomial,wan2021polynomial}. These PCE-based methods approximate the stochastic dynamics by projecting onto a set of orthogonal polynomials, which results in a deterministic expanded system of PCE coefficients that can be controlled with deterministic MPC methods. 

Although PCE-based SMPC methods are widely used, they have several notable drawbacks. First, stochastic Galerkin projections (SGPs) do not necessarily preserve stability, even for linear systems, and introduce truncation errors that can propagate over the prediction horizon \cite{pulch2019stability,evangelisti2024application}. Second, PCE-based methods suffer from the curse of dimensionality, as the number of basis polynomials for a fixed degree increases combinatorially with the uncertainty dimension \cite{blatman2011adaptive}. This issue is further exacerbated in the time-varying disturbance setting, where the uncertainty dimension increases linearly with the SMPC horizon, further worsening the curse of dimensionality \cite{bhusal2021generalized,bradford2019output}. On the other hand, in the deterministic MPC literature, the MPC problem dimension can be made to depend only on the horizon and control dimensions through a technique known as model \textit{condensation} \cite{rawlings2020model,jerez2012sparse}. This process algebraically eliminates the states by recursively substituting the dynamical equations into the prediction model. Altogether, these observations motivate the use of condensing techniques in PCE-based SMPC to ensure that the online optimization problem dimension scales only with the horizon and control dimensions, rather than with the number of PCE basis polynomials.

Recently, a popular data-driven approach to improve the computational tractability of nonlinear deterministic MPC is to learn a linear high-dimensional representation of the dynamics with Koopman operator methods~\cite{korda2018linear,proctor2018generalizing,arbabi2018data,shi2022deep,lazar2024basis,williams2016extending}. These methods enable the reformulation of nonlinear MPC optimization problems as compact QPs after condensation to remove the high-dimensional state dependence \cite{wu2026least}. To address modeling errors arising from the finite-dimensional approximation of the Koopman operator, robust tube-based MPC formulations have been proposed that ensure closed-loop stability under bounded additive disturbances~\cite{zhang2022robust}, while stochastic MPC has been combined with Koopman models to handle approximation uncertainty probabilistically through chance constraints~\cite{kim2025ksmpc}. However, existing approaches do not jointly incorporate time-invariant parametric uncertainty and time-varying stochastic disturbances directly into the Koopman representation in a form that enables PCE-based uncertainty propagation and condensed SMPC. Parametric Koopman operators have been proposed to extend Koopman representations to systems with time-invariant or time-varying inputs~\cite{guo2025learning,williams2016extending}, which implicitly suggests parameterizing the Koopman operator by a PCE. This approach enables the application of linear SMPC methods to nonlinear systems, which is the gap addressed by our proposed Polynomial Parametric Koopman Operators (PPKO) framework.
\subsection{Contributions}
The main contributions of this article are as follows.
\begin{itemize}
\item We propose a PCE-based parametric Koopman operator framework for nonlinear dynamical systems with time-invariant parametric uncertainty and time-varying stochastic disturbances. 
\item We show that restricting the parameter dependence of the Koopman operator to a finite-order PCE introduces no irreducible approximation gap beyond polynomial truncation; under the stated assumptions, this truncation error vanishes as the polynomial order increases.
\item We derive a condensed deterministic reformulation of the SMPC problem for systems with both time-invariant parametric uncertainty and time-varying stochastic disturbances. Exploiting the IID disturbance structure, the required first- and second-order moments are propagated recursively and computed offline, yielding an online convex optimization problem whose dimension scales only with the control horizon and input dimension, independent of the number of Koopman observables and retained PCE terms.
\item We validate the framework on both ODE and PDE examples, including the Duffing oscillator and the electrothermal cooling system.
\end{itemize}

This article extends the previously presented results in \cite{iliakis2026polynomial} by generalizing the framework to include time-varying stochastic disturbances, a rigorous analysis of the approximation gap when parameterizing Koopman operators with PCEs, and an additional numerical study using the electrothermal cooling system as a high-dimensional PDE--ODE example.

\section{Problem Formulation}
Consider the problem of regulating nonlinear stochastic discrete-time systems to the origin:
\begin{equation}\label{eq:nonlinear}
    x_{t+1} = f(x_t,u_t,w_t,\theta),
\end{equation}
where $x_t \in \mathbb{X} \subseteq \mathbb{R}^{n_x}$ is the state, $u_t \in \mathbb{U} \subseteq \mathbb{R}^{n_u}$ is the control input, $w_t \in \mathbb{W} \subseteq \mathbb{R}^{n_w}$ is a vector of time-varying disturbances, and $\theta \in \Theta \subseteq \mathbb{R}^{n_\theta}$ is a random vector characterizing parametric uncertainty. We impose the following assumptions:
\begin{assum}\label{assum:system}
The sets $\mathbb{X} \subset \mathbb{R}^{n_x}$, $\mathbb{U} \subset \mathbb{R}^{n_u}$, $\mathbb{W} \subset \mathbb{R}^{n_w}$, and $\Theta \subset \mathbb{R}^{n_\theta}$ are Borel sets, and the map $f:\mathbb{X}\times\mathbb{U}\times\mathbb{W}\times\Theta \to \mathbb{X}$ is jointly continuous. The state space $\mathbb{X}$ is forward invariant under \eqref{eq:nonlinear}, i.e., $f(x,u,w,\theta)\in\mathbb{X}$ for all $x\in\mathbb{X}$, $u\in\mathbb{U}$, $w\in\mathbb{W}$, and $\theta\in\Theta$. The input set $\mathbb{U}$ is compact and contains the origin. The state space $\mathbb{X}$ contains the origin and may be unbounded. The admissible input set is $\mathbb{U}:=\{u\in\mathbb{R}^{n_u}:u_{\min}\le u\le u_{\max}\}$, where the inequalities are interpreted component-wise.
\end{assum}
\begin{assum}
The disturbance sequence $\{w_t\}_{t=0}^\infty$ is assumed to be i.i.d. and independent of $\theta$. We assume that the probability laws of $w_t$ and \(\theta\) are known and that their realizations are not directly measured.
\end{assum}
% \begin{assum}\label{assum:equilibrium}
% The origin is an equilibrium of \eqref{eq:nonlinear} in the absence of disturbance, i.e., for all $\theta \in \Theta$, one has 
% $
% f(0,0,0,\theta)=0.$
% \end{assum}
Stochastic Model Predictive Control (SMPC) addresses this problem by solving a finite-horizon stochastic optimal control problem with additional state and control constraints. Assuming that a measurement of the state $x_0$ is available at time $ t = 0$, the finite-horizon moment-constrained SMPC problem with a linear quadratic regulator (LQR) type cost is shown in \eqref{eq:smpc_problem}.
\noindent\rule{\columnwidth}{0.4pt}
\vspace{-0.7em}
\textbf{Stochastic MPC}
\vspace{-0.4em}

{\small
\setlength{\abovedisplayskip}{2pt}
\setlength{\belowdisplayskip}{2pt}
\setlength{\jot}{2pt}
\begin{subequations}\label{eq:smpc_problem}
\begin{align}
\min_{\{u_t\}_{t=0}^{H-1}} 
&~ \sum_{t=0}^{H-1} 
\mathbb{E}\big[x_t^\top Q x_t + u_t^\top R u_t \big]
+ \mathbb{E}\big[x_H^\top Q_f x_H\big]
\label{eq:smpc-cost}
\\
\text{s.t.}~
& x_{t+1} = f(x_t,u_t,w_t,\theta),\quad t=0,\dots,H-1,
\label{eq:smpc-dyn}
\\
& x_t^{\min} \le \mathbb{E}[x_t] \le x_t^{\max},
\quad t=1,\dots,H,
\label{eq:smpc-mean}
\\
& \mathbb{E}\big[(a_t^\top x_t-b_t)^2\big] \le c_t^2,
\quad t=1,\dots,H,
\label{eq:2ndorderconstraints}
\\
& u_{\min} \le u_t \le u_{\max},
\quad t=0,\dots,H-1
\label{eq:smpc-input}
\end{align}
\end{subequations}
}
\par\noindent\rule{\columnwidth}{0.4pt}\par
Here, $H$ denotes the prediction horizon; the matrices $Q \succeq 0$, $R \succ 0$, and $Q_f \succeq 0$ are weighting matrices; and the vectors $u_{\min}, u_{\max} \in \mathbb{R}^{n_u}$ define the input bounds. We use $\mathbb{E}[\cdot]$ for expectation over all random variables in the expression, unless conditioning or a subscript specifies otherwise. In \eqref{eq:smpc_problem}, the expectations are over $\theta$ and the disturbance history, with $x_0$ and the input sequence held fixed. The vectors $a_t \in \mathbb{R}^{n_x}$, $t=1,\ldots,m$, define the directions
associated with the second-order safety/risk constraints, while $t=1,\ldots,H$
selects the time instant along the prediction horizon at which each constraint is enforced.
The scalar $b_t$ specifies the corresponding offset, and $c_{t}$ specifies the allowable
second-moment bound for the constraint at time $t$. The second-moment constraints \eqref{eq:2ndorderconstraints} can provide conservative sufficient approximations of two-sided probabilistic constraints via a Chebyshev-type bound \cite{paulson2020stochastic}. Specifically, for $r_t>0$ and $\epsilon_t\in(0,1)$, choosing $c_t^2\le\epsilon_t r_t^2$ ensures $\mathbb P(|a_t^\top x_t-b_t|\ge r_t)\le\epsilon_t$ by applying Markov's inequality to the squared deviation.

The nonlinear dynamics \eqref{eq:smpc-dyn} in SMPC \eqref{eq:smpc_problem} usually result in an optimization that is both nonlinear and nonconvex.
Furthermore, in the presence of the time-varying disturbances $\{w_t\}_{t=0}^H$, evaluating the stochastic costs and constraints typically requires Monte Carlo methods or scenario-based approximations, which are prohibitive for longer horizons. Consequently, solving SMPC \eqref{eq:smpc_problem} online becomes computationally expensive, posing a major hurdle for fast real-time applications.

In deterministic nonlinear MPC, Koopman operators have been used to lift the nonlinear dynamics in \eqref{eq:smpc-dyn} to a higher-dimensional space of observables in which the lifted states evolve linearly. This allows the nonlinear MPC problem to be reformulated as a linear MPC problem. With quadratic stage costs and linear constraints, the resulting optimization is a convex quadratic program (QP). Such QPs can be solved efficiently using standard solvers, with \(O(n^3)\) complexity comparable to that of solving a linear system \cite{11431115}, and admit data-independent execution time certificates that are desirable for real-time control \cite{10683964,11240592}. When second-moment constraints are included, the lifted formulation instead yields a convex quadratically constrained quadratic program (QCQP), which can also be solved efficiently using modern conic solvers such as MOSEK and CLARABEL. On the other hand, there is a robust literature on SMPC frameworks based on PCEs, which enable uncertainty propagation into the prediction model in a structured and computationally tractable manner, thereby facilitating the evaluation of stochastic objectives and moment-based constraints \cite{mishra2024polynomial,mesbah2016stochastic}. This implicitly suggests combining these two methodologies into a single framework for SMPC of nonlinear stochastic processes. 
\section{Polynomial Chaos Theory}
This section briefly reviews polynomial chaos theory for the representation of random variables and uncertainty propagation. In \eqref{eq:nonlinear}, uncertainty enters through two distinct sources: the time-varying stochastic disturbance $w_t$ and the time-invariant parametric uncertainty $\theta$. Accordingly, the parametric Koopman matrices are expected to depend on the joint uncertainty variable $(w_t,\theta)$. In this article, we approximate this dependence using PCEs, thereby enabling the integration of Koopman operators with PCE-based linear SMPC formulations.

Let $(\Omega,\mathcal F,\mathbb P)$ be a probability space, and let $\zeta:\Omega\to\Gamma_\zeta\subseteq\mathbb R^d$ be a square-integrable random vector with law $\mu_\zeta$. Assume that $\mu_\zeta$ has finite moments of all orders, and define
\begin{equation*}
L^2(\Gamma_\zeta,\mu_\zeta)
:=
\left\{
f:\Gamma_\zeta\to\mathbb R \;\middle|\; \mathbb E\!\left[|f(\zeta)|^2\right]<\infty
\right\}.
\end{equation*}

\begin{definition}
A \textbf{polynomial chaos expansion} of a square-integrable random variable $X\in L^2(\Omega,\mathcal F,\mathbb P)$ is a representation of the form
\begin{equation}\label{eq:PCE}
X(\omega)=\sum_{\alpha\in\mathbb N_0^d} c_\alpha \Phi_\alpha(\zeta(\omega)),
\end{equation}
where $\alpha=(\alpha_1,\dots,\alpha_d)\in\mathbb N_0^d$ is a multi-index degree with total degree $|\alpha|:=\alpha_1+\cdots+\alpha_d$, $\{\Phi_\alpha\}_{\alpha\in\mathbb N_0^d}$ is a complete orthonormal basis with respect to $\mu_\zeta$, satisfying $\mathbb E\!\left[\Phi_\alpha(\zeta)\Phi_\beta(\zeta)\right]=\delta_{\alpha,\beta}$ for all $\alpha,\beta\in\mathbb N_0^d$, with $\Phi_{\mathbf 0}(\zeta)\equiv 1$, and the series in \eqref{eq:PCE} converges in $L^2(\Omega,\mathcal F,\mathbb P)$.
\end{definition}

If $\{\Phi_\alpha\}_{\alpha\in\mathbb N_0^d}$ is complete in $L^2(\Gamma_\zeta,\mu_\zeta)$, then every random variable of the form $X=f(\zeta)$, with $f\in L^2(\Gamma_\zeta,\mu_\zeta)$, admits a unique polynomial chaos expansion of the form \eqref{eq:PCE}. The coefficients are given by the projection formula
$
c_\alpha=\mathbb E\!\left[X\,\Phi_\alpha(\zeta)\right].
$
Moreover, orthonormality implies
\begin{equation}
\mathbb E[X]=c_{\mathbf 0},
\qquad
\mathrm{Var}[X]=\sum_{\alpha\neq \mathbf 0} c_\alpha^2.
\end{equation}
We impose the following conditions on the uncertainty in \eqref{eq:nonlinear}.
\begin{assumption}[Polynomial chaos property]\label{ass:pce_property}
Let $w_t:\Omega\to\mathbb W\subseteq\mathbb R^{n_w}$ be an i.i.d.\ sequence with common law $\mu_w$, and let $\theta:\Omega\to\Theta\subseteq\mathbb R^{n_\theta}$ have law $\mu_\theta$. Assume that $\{w_t\}_{t\geq 0}$ is independent of $\theta$, and that $L^2(\mathbb W,\mu_w)$ and $L^2(\Theta,\mu_\theta)$ admit complete orthonormal polynomial bases
$
\left\{
\Phi_\alpha^w(w)
\right\}_{\alpha\in\mathbb N_0^{n_w}}
$
and
$
\left\{
\Phi_\beta^\theta(\theta)
\right\}_{\beta\in\mathbb N_0^{n_\theta}},
$
respectively.
\end{assumption}

Under Assumption~\ref{ass:pce_property}, define the joint uncertainty variable $\xi_t:=(w_t,\theta)$, the joint uncertainty space $\Gamma_\xi:=\mathbb W\times\Theta$, and the corresponding probability law $\mu_\xi:=\mu_w\otimes\mu_\theta$. By tensorization,
$
\left\{
\Phi_\alpha^w(w)\Phi_\beta^\theta(\theta)
\right\}_{\alpha\in\mathbb N_0^{n_w},\,\beta\in\mathbb N_0^{n_\theta}}
$
forms a complete orthonormal basis of $L^2(\Gamma_\xi,\mu_\xi)$. We fix an ordering of the index pairs $(\alpha,\beta)$ by nondecreasing total degree $|\alpha|+|\beta|$ and denote the $k$th pair by $(\alpha(k),\beta(k))$. The corresponding product basis functions are then relabeled as
$
\Phi_k^\xi(\xi)
:=
\Phi_{\alpha(k)}^w(w)\Phi_{\beta(k)}^\theta(\theta),
\quad k\in\mathbb N_0,
$
with $\Phi_0^\xi\equiv 1$. Consequently, every square-integrable function of $\xi_t$ admits a unique polynomial chaos expansion in the basis $\{\Phi_k^\xi\}_{k\in\mathbb N_0}$. In particular, we show later that the entries of the parametric Koopman matrices belong to $L^2(\Gamma_\xi,\mu_\xi)$. Their dependence on $\xi=(w,\theta)$ therefore admits a convergent PCE and may be approximated by a finite-order truncation.

\section{Polynomial Parametric Koopman Operators for Stochastic Systems}
This section develops an extension of conventional Koopman models for stochastic systems, which we refer to as \emph{Polynomial Parametric Koopman Operators} (PPKOs). As discussed in the introduction, the uncertain nonlinear dynamics in \eqref{eq:smpc-dyn} generally cannot be efficiently represented by a single uncertainty-independent Koopman operator. This motivates a family of Koopman operators parameterized by the joint uncertainty variables $\xi=(w,\theta)$. Parametric Koopman models were introduced in \cite{guo2025learning} to construct Koopman approximations of dynamical systems with unknown parameters.

We begin by adapting the parametric Koopman formulation in \cite{guo2025learning} to autonomous systems parameterized by the joint uncertainty. Let $\xi:=(w,\theta)\in\Gamma_\xi:=\mathbb{W}\times\Theta$ denote a generic realization of the time-varying disturbance $w$ and the time-invariant uncertain parameter $\theta$. For each fixed $\xi\in\Gamma_\xi$, consider the parametrized autonomous discrete-time dynamics
\begin{equation}\label{eq:param_dyn}
x_{t+1}=f(x_t,\xi).
\end{equation}
Although the realization $w_t$ varies along a stochastic trajectory, the operator-theoretic construction below is defined for each fixed realization $\xi$ in a single step. Let $\psi:\mathbb{X}\to\mathbb{R}$ be an observable function. The dynamics \eqref{eq:param_dyn} induce the evolution $\psi(x_{t+1})=\psi(f(x_t,\xi))$, which motivates the definition of the parametrized Koopman operator $\cK(\xi)$ as
\begin{equation}\label{eq:param_koopman_def}
\cK(\xi)\psi(x):=\psi\!\left(f(x,\xi)\right).
\end{equation}
For every fixed realization $\xi\in\Gamma_\xi$, the Koopman operator acts linearly on observables; that is, for any observables $\psi_1,\psi_2$ and scalars $a,b\in\mathbb{R}$,
\begin{equation}\label{eq:param_koopman_lin}
\cK(\xi)\big(a\psi_1+b\psi_2\big)
=
a\,\cK(\xi)\psi_1+b\,\cK(\xi)\psi_2.
\end{equation}
Analogous to the deterministic case without uncertainty, for each fixed realization $\xi=(w,\theta)\in\Gamma_\xi$, the Koopman operator $\cK(\xi)$ is infinite-dimensional and must be approximated in a finite-dimensional subspace for numerical implementation. Following the approach in \cite{guo2025learning}, adapted here to the joint uncertainty variable $\xi$, let $\Psi(x):=\left[\psi_1(x),\dots,\psi_{n_\psi}(x)\right]^\top$ be the observable dictionary. The objective in parametric Koopman operators is to identify a matrix-valued function $K(\xi)\in\mathbb{R}^{n_\psi\times n_\psi}$ such that 
\begin{equation}
 K(\xi) \Psi(x) \approx \mathcal K(\xi)\Psi(x) \end{equation} 
for all $\xi\in\Gamma_\xi$, where $\Kop(\xi)$ acts on $\Psi$ component-wise. We further impose the following assumption on the observable dictionary:
\begin{assumption}
The dictionary $\Psi(x)$ contains $n_\psi\ge n_x+1$ observables, ordered so that $\psi_1(x):=1$ and $\psi_{i+1}(x):=x_i$ for $i=1,\ldots,n_x$.
\end{assumption}

\textbf{Polynomial Dependence of the Koopman Matrix}. An important distinction between the approach in \cite{guo2025learning} and the proposed framework is the parametrization of the finite-dimensional Koopman matrix. In \cite{guo2025learning}, the dependence of the Koopman matrix on the unknown parameters is represented by a deep neural network. In contrast, we approximate the dependence of $K(\xi)$ on the joint uncertainty variable $\xi=(w,\theta)$ using a PCE. This polynomial structure is particularly useful for stochastic control because the orthogonality of the PCE basis facilitates the evaluation of the expectations and second moments appearing in the SMPC problem. For a prescribed total degree $p$, let
$
N_p:=\binom{n_w+n_\theta+p}{p}
$
denote the number of retained joint polynomial basis functions. Using the ordering introduced in the preceding section, we approximate $K(\xi)$ by the truncated PCE
\begin{equation}\label{eq:PPKO}
K_p(\xi)
=
\sum_{k=0}^{N_p-1}
K_k\Phi_k^\xi(\xi),
\end{equation}
where $K_k\in\mathbb{R}^{n_\psi\times n_\psi}$ are the PCE coefficient matrices. We refer to the finite-dimensional representation in \eqref{eq:PPKO} as a \emph{polynomial parametric Koopman operator} (PPKO).

% Equivalently, after reindexing the retained joint basis functions as $\{\Phi_k^\xi(\xi)\}_{k=0}^{N_p-1}$, \eqref{eq:PPKO} may be written as \begin{equation} K_p(\xi) = \sum_{k=0}^{N_p-1} K_k\Phi_k^\xi(\xi). \end{equation} 

\begin{remark}
A priori, it is unclear whether restricting the parametric Koopman representation to a finite-order PCE in \eqref{eq:PPKO} introduces an additional approximation gap. To address this question, we establish the approximation properties of the PPKO framework and its convergence as the polynomial order is increased later in this section.
\end{remark}
Once the PPKO is identified, state predictions for a realization of the joint uncertainty sequence $\{\xi_t=(w_t,\theta)\}_{t\ge0}$ can be computed via the linear parameter-varying (LPV) dynamics in the lifted space,
\begin{equation}\label{eq:LTI}
z_{t+1} \approx K_p(\xi_t)z_{t},~x_{t+1} = Cz_{t+1} ,
\end{equation}
where $z_0=\Psi(x_0)$ and the output matrix $C\in\mathbb R^{n_x\times n_\psi}$ is
\[
C=[0_{n_x\times1}, I_{n_x}, 0_{n_x\times(n_\psi-1-n_x)}].
\]
The assumed dictionary ordering ensures $C\Psi(x)=x$.

\subsection{PPKO for Controlled Systems}
We now extend the above PPKO construction to the controlled stochastic system in \eqref{eq:nonlinear}, where the control input is treated separately from the joint uncertainty variable $\xi=(w,\theta)$.
For each fixed $u\in\mathbb U$ and $\xi\in\Gamma_\xi$, the state-transition map in \eqref{eq:nonlinear} induces the Koopman operator acting on observables $\psi:\mathbb{X}\to\mathbb{R}$,
\begin{equation}
\mathcal K(u,\xi)\psi(x)
:=
\psi\!\left(f(x,u,\xi)\right).
\end{equation}
For every fixed $(u,\xi)$, $\mathcal K(u,\xi)$ acts linearly on the observable $\psi$.
To obtain a finite-dimensional model suitable for control, we approximate the action of the Koopman operator $\mathcal K(u,\xi)$ on the observable dictionary $\Psi(x):=\left[\psi_1(x),\dots,\psi_{n_\psi}(x)\right]^\top$ with a linear parameter-varying (LPV) predictor of the form
\begin{equation}\label{eq: affine in input}
\mathcal K(u,\xi) \Psi(x) \approx A(\xi)\Psi(x) + B(\xi)u
\end{equation}
where $\Kop(u,\xi)$ acts on $\Psi$ component-wise, $A(\xi)\in\mathbb R^{n_\psi\times n_\psi}$ and $B(\xi)\in\mathbb R^{n_\psi\times n_u}$. We show in Section~\ref{sec: condensed} that the LPV structure in \eqref{eq: affine in input} enables condensed deterministic reformulations of the SMPC problem.
\begin{remark}\label{remark: structural bias}
The LPV predictor in \eqref{eq: affine in input} is a structurally biased model class rather than a universally exact controlled Koopman representation, which generally requires lifting the input as well; see \cite{iacob2024koopman,heeg2026limitations,shang2026existence}. Consequently, increasing the number of observables $n_\psi$ alone generally does not drive the prediction error arbitrarily close to zero. In return, its affine-in-input structure enables condensed deterministic SMPC formulations detailed in Section \ref{sec: condensed} with reduced online computational cost.
\end{remark}
 Following the polynomial parameterization introduced in the previous subsection, we approximate $A(\xi)$ and $B(\xi)$ by truncated PCEs:
\begin{equation}\label{eq:controlled_ppko_pce} A_p(\xi) = \sum_{k=0}^{N_p-1}A_k\Phi_k^\xi(\xi), \qquad B_p(\xi) = \sum_{k=0}^{N_p-1}B_k\Phi_k^\xi(\xi), \end{equation}

Once the PPKO matrices in \eqref{eq:controlled_ppko_pce} are identified, the evolution of the lifted state $z=\Psi(x)$ can be approximated by an LPV system
\begin{subequations}\label{eq:controlled_ppko} \begin{align} z_{t+1} &\approx A_p(\xi_t)z_t+B_p(\xi_t)u_t,\\ x_t &= Cz_t, \end{align} \end{subequations}

\subsection{Approximation Properties of PPKOs}
As noted in the previous subsection, a priori, it is not obvious whether restricting the parametric Koopman representation to a finite-order PCE in \eqref{eq:controlled_ppko_pce} introduces an additional approximation gap. This subsection characterizes the approximation error introduced by approximating the optimal LPV matrices
$A_*(\xi)$ and $B_*(\xi)$ by finite-order PCEs $A_p(\xi)$ and $B_p(\xi)$.

To characterize the optimal LPV matrices $A_*(\xi)$ and $B_*(\xi)$, let $\nu$ denote a probability distribution on $\mathbb X\times\mathbb U$ from which the state--input pairs $(x,u)$ are sampled, and define $q(x,u):=[\Psi(x)^\top,u^\top]^\top$. For each uncertainty realization $\xi\in\Gamma_\xi$, define $A_*(\xi)$ and $B_*(\xi)$ as the joint minimizers of the population one-step prediction error:
\begin{small}
\begin{equation}\label{eq:population_controlled_koopman}
\begin{aligned}
&\big(A_*(\xi),B_*(\xi)\big)\\
&:=
\operatorname*{arg\,min}_{A,B}\;
\mathbb E_{(x,u)\sim\nu}
\left[
\left\|
\Psi\!\left(f(x,u,\xi)\right)
-A\Psi(x)-Bu
\right\|_2^2
\right],
\end{aligned}
\end{equation}
\end{small}
where $A\in\mathbb R^{n_\psi\times n_\psi}$ and $B\in\mathbb R^{n_\psi\times n_u}$. To ensure that \eqref{eq:population_controlled_koopman} is well posed and admits a unique solution, we impose the following assumption.
\begin{assumption}\label{assum:observ}
The following conditions hold:
\begin{itemize}
\item Each observable is continuous, and
\[
\begin{aligned}
\mathbb E_\nu[\|q(x,u)\|_2^2]&<\infty,\\
\mathbb E_{\mu_\xi\otimes\nu}[\|\Psi(f(x,u,\xi))\|_2^2]&<\infty.
\end{aligned}
\]
\item For every compact $K\subseteq\Gamma_\xi$, there exists a nonnegative $h_K\in L^2(\mathbb X\times\mathbb U,\nu)$ such that $\|\Psi(f(x,u,\xi))\|_2\le h_K(x,u)$ for $\nu$-almost every $(x,u)$ and all $\xi\in K$.
\item The regressor Gramian
$
\Sigma:=\mathbb E_{\nu}\!\left[q(x,u)q(x,u)^\top\right]
$
is positive definite.
\end{itemize}
\end{assumption}
Under Assumption~\ref{assum:observ}, the minimizer in \eqref{eq:population_controlled_koopman} is unique and satisfies
\begin{equation}\label{eq:population_controlled_koopman_sol}
\begin{bmatrix}
A_*(\xi) & B_*(\xi)
\end{bmatrix}
=
\mathbb E_{\nu}
\left[
\Psi\!\left(f(x,u,\xi)\right)q(x,u)^\top
\right]\Sigma^{-1}.
\end{equation}
Thus, for each realization of the joint uncertainty $\xi$, $A_*(\xi)$ and $B_*(\xi)$ define the optimal LPV predictor in the mean-square sense induced by the data distribution $\nu$. Moreover, we have the following lemma:
\begin{lemma}
\label{lem:lpv_continuity}
Under Assumptions~\ref{assum:system} and~\ref{assum:observ}, the optimal LPV matrix functions
$
A_*:\Gamma_\xi\to\mathbb R^{n_\psi\times n_\psi}
$
and
$
B_*:\Gamma_\xi\to\mathbb R^{n_\psi\times n_u},
$
defined by \eqref{eq:population_controlled_koopman}, are continuous.
\end{lemma}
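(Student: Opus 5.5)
Our approach is to exploit the closed-form expression \eqref{eq:population_controlled_koopman_sol}. Since $\Sigma$ is a fixed positive definite matrix that does not depend on $\xi$ (Assumption~\ref{assum:observ}), $\Sigma^{-1}$ is a constant matrix. Hence continuity of $[A_*(\xi)\;B_*(\xi)]$ reduces to continuity of the matrix-valued map
\[
M(\xi):=\mathbb E_{\nu}\!\left[\Psi\!\left(f(x,u,\xi)\right)q(x,u)^\top\right],
\]
because right-multiplication by $\Sigma^{-1}$ is a continuous linear map. Extracting the first $n_\psi$ columns and the last $n_u$ columns is likewise continuous, so continuity of $A_*$ and $B_*$ follows from that of $M$. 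It suffices to check each entry $M_{ij}(\xi)=\mathbb E_\nu[\psi_i(f(x,u,\xi))\,q_j(x,u)]$.

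Continuity is local, so fix $\xi_0\in\Gamma_\xi$ and a sequence $\xi_n\to\xi_0$ in $\Gamma_\xi$. We will show $M_{ij}(\xi_n)\to M_{ij}(\xi_0)$; sequential continuity suffices in the metric space $\Gamma_\xi$.

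\emph{Pointwise convergence.} By Assumption~\ref{assum:system}, $f$ is jointly continuous. By Assumption~\ref{assum:observ}, each $\psi_i$ is continuous. Hence for every $(x,u)$ we have $\psi_i(f(x,u,\xi_n))q_j(x,u)\to\psi_i(f(x,u,\xi_0))q_j(x,u)$.

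\emph{Domination.} The set $K:=\{\xi_0\}\cup\{\xi_n:n\ge1\}$ is compact in $\Gamma_\xi$, being a convergent sequence together with its limit. The second bullet of Assumption~\ref{assum:observ} therefore provides $h_K\in L^2(\nu)$ with $|\psi_i(f(x,u,\xi_n))|\le\|\Psi(f(x,u,\xi_n))\|_2\le h_K(x,u)$ for $\nu$-a.e.\ $(x,u)$ and all $n$. Consequently
\[
|\psi_i(f(x,u,\xi_n))q_j(x,u)|\le h_K(x,u)\,\|q(x,u)\|_2 .
\]
The right-hand side is $\nu$-integrable by Cauchy--Schwarz, since $h_K\in L^2(\nu)$ and $\mathbb E_\nu\|q\|_2^2<\infty$.

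\emph{Conclusion.} The dominated convergence theorem yields $M_{ij}(\xi_n)\to M_{ij}(\xi_0)$, so $M$ is continuous. Hence $A_*$ and $B_*$ are continuous.

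The only step that needs care is the domination, since $\mathbb X$ may be unbounded and $\Psi$ need not be bounded. This is exactly why we pass to a compact set of parameters and invoke the locally uniform $L^2$ envelope $h_K$. The remaining ingredients are routine: measurability of the integrands, which holds because they are compositions of continuous maps on Borel sets, and the well-definedness of $M(\xi)$ for each fixed $\xi$, which follows from the same domination with $K=\{\xi\}$.
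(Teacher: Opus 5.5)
Your proof is correct and follows essentially the same route as the paper. Both argue sequential continuity via the closed form, with pointwise convergence from continuity of $\Psi\circ f$ and dominated convergence using the envelope $h_K(x,u)\|q(x,u)\|_2$ on the compact set $\{\xi_0\}\cup\{\xi_n:n\ge1\}$, integrable by Cauchy--Schwarz. The only cosmetic difference is that you argue entrywise, whereas the paper bounds the matrix integrand in Frobenius norm.
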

\begin{proof}
Let $\xi_n$ be a sequence such that $\xi_n \to \xi$. Then we have 
\begin{align*}
&\lim_{n \to \infty}
\begin{bmatrix}
A_*(\xi_n) & B_*(\xi_n)
\end{bmatrix}\\
&= \lim_{n\to \infty}
\mathbb E_{\nu}
\left[
\Psi\!\left(f(x,u,\xi_n)\right)q(x,u)^\top
\right]\Sigma^{-1} \\
&= \mathbb E_{\nu} \lim_{n\to \infty}
\left[
\Psi\!\left(f(x,u,\xi_n)\right)q(x,u)^\top
\right]\Sigma^{-1} \\
&= \mathbb E_{\nu}
\left[
\Psi\!\left(f(x,u,\xi)\right)q(x,u)^\top
\right]\Sigma^{-1} \\
&= \begin{bmatrix}
A_*(\xi) & B_*(\xi)
\end{bmatrix}
\end{align*}
where the second equality follows from the Dominated Convergence Theorem. Indeed, $K:=\{\xi\}\cup\{\xi_n:n\geq1\}$ is compact, and Assumption~\ref{assum:observ} bounds the integrand in Frobenius norm by $h_K(x,u)\|q(x,u)\|_2$, which is integrable by the Cauchy--Schwarz inequality. The third equality follows from the continuity of $\Psi \circ f$ since $f$ is continuous by Assumption \ref{assum:system} and $\Psi$ is continuous by Assumption \ref{assum:observ}.
\end{proof}
Having established the continuity of $[A_*, B_*]$ in Lemma \ref{lem:lpv_continuity}, the collection of least squares problems for each $\xi \in \Gamma_\xi$ in \eqref{eq:population_controlled_koopman} may be viewed as a single optimization problem over continuous, square-integrable matrix-valued functions. Specifically, if we define
\begin{equation}\label{eq:continuous_lpvs}
\begin{aligned}
\mathcal C:=\Big\{(A,B):\;&A\in C(\Gamma_\xi;\mathbb R^{n_\psi\times n_\psi}),\\
&B\in C(\Gamma_\xi;\mathbb R^{n_\psi\times n_u}),\\
&\mathbb E_\xi[\|A(\xi)\|_F^2+\|B(\xi)\|_F^2]<\infty\Big\},
\end{aligned}
\end{equation}
and the single-step population prediction error $\mathcal J : \mathcal{C} \to \mathbb{R}^{\geq 0}$ given by 
{\small
\begin{equation}\label{eq:integrated_population_error}
\mathcal J(A,B)
:=
\underset{\substack{\xi\sim\mu_\xi\\(x,u)\sim\nu}}{\mathbb E}
\left[
\left\|
\Psi\!\left(f(x,u,\xi)\right)
-A(\xi)\Psi(x)-B(\xi)u
\right\|_2^2
\right].
\end{equation}
}
Then the optimal LPV matrices satisfy the following functional minimization problem over the function space $\mathcal C$:
\begin{equation}\label{eq:continuous_population_minimization}
[A_*,B_*]
\in
\operatorname*{arg\,min}_{(A,B)\in\mathcal C}
\mathcal J(A,B).
\end{equation}
By \eqref{eq:population_controlled_koopman_sol} and the Cauchy--Schwarz inequality,
\[
\begin{aligned}
&\mathbb E_\xi[\|[A_*(\xi)\;B_*(\xi)]\|_F^2]\\
&\quad\le \|\Sigma^{-1}\|_2^2\,\mathbb E_\nu[\|q\|_2^2]\\
&\qquad\times\mathbb E_{\mu_\xi\otimes\nu}[\|\Psi(f(x,u,\xi))\|_2^2]
<\infty.
\end{aligned}
\]
Thus, $A_*$ and $B_*$ are square integrable with respect to $\mu_\xi$ and, by Lemma~\ref{lem:lpv_continuity}, belong to $\mathcal C$. By Assumption~\ref{ass:pce_property}, they therefore admit convergent PCE representations:
\begin{equation}\label{eq:optimal_lpv_pce}
A_*(\xi)
=
\sum_{k=0}^{\infty}A_{*,k}\Phi_k^\xi(\xi),
\qquad
B_*(\xi)
=
\sum_{k=0}^{\infty}B_{*,k}\Phi_k^\xi(\xi).
\end{equation}
This implicitly suggests that $[A_*, B_*]$ can be approximated using a truncated finite-order PCE, which justifies the PPKO representation \eqref{eq:controlled_ppko_pce}. Formally, if we define 
\begin{equation}\label{eq:finite_order_ppko_class}
\begin{aligned}
\mathcal C_p
:=
\Bigg\{
[A_p \: B_p]\in\mathcal C:\;&
A_p(\xi)=\sum_{k=0}^{N_p-1}A_k\Phi_k^\xi(\xi),\\
&
B_p(\xi)=\sum_{k=0}^{N_p-1}B_k\Phi_k^\xi(\xi)
\Bigg\}.
\end{aligned}
\end{equation}
to be the vector space of polynomial functions with total degree at most $p$, we can consider the following finite order PCE-constrained minimization problem:
\begin{equation}\label{eq:finite_order_population_minimization}
[A_{p,*} \: B_{p,*}]
\in
\operatorname*{arg\,min}_{(A,B)\in\mathcal C_p}
\mathcal J(A,B).
\end{equation}
Since $\mathcal C_p \subseteq \mathcal C$, the single step errors satisfy
\[
\mathcal J(A_{p,*},B_{p,*}) \geq \mathcal J(A_{*},B_{*}),
\]
which shows that the finite-order PCE representation in the PPKO framework incurs an additional single-step prediction error. However, this additional error can be made arbitrarily low by increasing the approximation degree $p$, which is shown by the following theorem:
\begin{theorem}
Under Assumptions~\ref{assum:system}, \ref{assum:observ}, and~\ref{ass:pce_property}, the optimal
finite-order PPKO prediction error satisfies
\begin{equation}
\lim_{p \to \infty}
\mathcal{J}\!\left(\mathcal{M}_{p,*}\right)
=
\mathcal{J}\!\left(\mathcal{M}_*\right),
\end{equation}
where
\[
\mathcal{M}_{p,*}
=
\begin{bmatrix}
A_{p,*} & B_{p,*}
\end{bmatrix},
\qquad
\mathcal{M}_*
=
\begin{bmatrix}
A_* & B_*
\end{bmatrix}.
\]
\end{theorem}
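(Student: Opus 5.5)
Because $\mathcal J$ is a quadratic functional, the plan is to prove an exact Pythagorean identity
\[
\mathcal J(\mathcal M)=\mathcal J(\mathcal M_*)+\mathbb E_\xi\!\left[\operatorname{tr}\!\big((\mathcal M(\xi)-\mathcal M_*(\xi))\,\Sigma\,(\mathcal M(\xi)-\mathcal M_*(\xi))^\top\big)\right]
\]
for every $\mathcal M=[A\;B]\in\mathcal C$. This reduces the theorem to showing that the best $\Sigma$-weighted $L^2(\mu_\xi)$ approximation of $\mathcal M_*$ from $\mathcal C_p$ converges to $\mathcal M_*$, which is exactly PCE convergence.

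\textbf{Step 1 (decomposition).} Fix $\xi$ and write $\Psi(f(x,u,\xi))-\mathcal M(\xi)q=\big(\Psi(f)-\mathcal M_*(\xi)q\big)+(\mathcal M_*(\xi)-\mathcal M(\xi))q$. Expand the squared norm and take $\mathbb E_\nu$.
\begin{itemize}
\item The cross term vanishes by the normal equations behind \eqref{eq:population_controlled_koopman_sol}: $\mathbb E_\nu[(\Psi(f)-\mathcal M_*(\xi)q)q^\top]=0$.
\item The last term equals $\operatorname{tr}(\Delta(\xi)\Sigma\Delta(\xi)^\top)$ with $\Delta:=\mathcal M-\mathcal M_*$.
\end{itemize}
Integrating over $\mu_\xi$ then gives the identity. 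Fubini is justified because every term is integrable, by Assumption~\ref{assum:observ}, the square-integrability of $\mathcal M$, and the bound on $\mathcal M_*$ derived before the theorem.

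\textbf{Step 2 (sandwich).} Let $\lambda_{\max}$ be the largest eigenvalue of $\Sigma$. The identity gives
\[
0\le \mathcal J(\mathcal M)-\mathcal J(\mathcal M_*)\le \lambda_{\max}\,\mathbb E_\xi\|\mathcal M(\xi)-\mathcal M_*(\xi)\|_F^2 .
\]
Since $\mathcal M_{p,*}$ minimizes $\mathcal J$ over $\mathcal C_p$, for any $\mathcal M\in\mathcal C_p$,
\[
0\le\mathcal J(\mathcal M_{p,*})-\mathcal J(\mathcal M_*)\le \mathcal J(\mathcal M)-\mathcal J(\mathcal M_*).
\]
Choose $\mathcal M=\mathcal M_*^{(p)}$, the truncation of \eqref{eq:optimal_lpv_pce} to the first $N_p$ terms. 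This truncation lies in $\mathcal C_p$, since polynomials are continuous and square integrable because $\mu_\xi$ has all moments. By orthonormality and Parseval,
\[
\mathbb E_\xi\|\mathcal M_*^{(p)}-\mathcal M_*\|_F^2=\sum_{k\ge N_p}\big(\|A_{*,k}\|_F^2+\|B_{*,k}\|_F^2\big).
\]
This is the tail of a convergent series, because $\mathcal M_*\in L^2$ entrywise and the tensor basis is complete by Assumption~\ref{ass:pce_property}. Since $N_p\to\infty$, the tail tends to $0$, and the theorem follows.

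\textbf{Possible obstacle: existence of the minimizer $\mathcal M_{p,*}$.} I would not rely on this. The argument above goes through unchanged if $\mathcal J(\mathcal M_{p,*})$ is read as $\inf_{\mathcal C_p}\mathcal J$. Existence also holds directly: $\mathcal C_p$ is finite dimensional, and by the identity $\mathcal J$ is a coercive quadratic on it, since $\Sigma\succ0$ and the basis functions are linearly independent in $L^2$.

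The only substantive point is the cross-term cancellation in Step 1, which needs the pointwise normal equations to hold for \emph{every} $\xi$. These follow from the unique solution \eqref{eq:population_controlled_koopman_sol}, and the continuity of $\mathcal M_*$ (Lemma~\ref{lem:lpv_continuity}) guarantees measurability in $\xi$.
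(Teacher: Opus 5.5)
Your proposal is correct and follows essentially the same route as the paper's proof. Both use the Pythagorean decomposition of $\mathcal{J}$ from the normal equations, take the truncated PCE of $\mathcal{M}_*$ as an admissible competitor in $\mathcal{C}_p$, bound the excess error by the $L^2(\mu_\xi)$ tail, and conclude with the squeeze $\mathcal{J}(\mathcal{M}_*)\le\mathcal{J}(\mathcal{M}_{p,*})\le\mathcal{J}(\mathcal{M}_p)$. The differences are cosmetic: your constant $\lambda_{\max}(\Sigma)$ is slightly sharper than the paper's $\mathbb{E}_\nu\|q\|_2^2=\operatorname{tr}\Sigma$, and your remark that the argument still works with $\mathcal{J}(\mathcal{M}_{p,*})$ read as $\inf_{\mathcal{C}_p}\mathcal{J}$ adds a little robustness.
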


\begin{proof}
Denote
\[
\mathcal{M}
=
\begin{bmatrix}
A & B
\end{bmatrix},
\qquad
\mathcal{M}_*
=
\begin{bmatrix}
A_* & B_*
\end{bmatrix}.
\]
Then, we can rewrite \eqref{eq:integrated_population_error} as
\begin{equation}
\label{eq:integrated_population_error_M}
\mathcal{J}(\mathcal{M})
:=
\mathbb{E}_{\substack{\xi \sim \mu_\xi\\(x,u)\sim\nu}}
\left[
\left\|
\Psi\!\left(f(x,u,\xi)\right)
-
\mathcal{M}(\xi)q(x,u)
\right\|_2^2
\right],
\end{equation}
where
\[
q(x,u)
=
\begin{bmatrix}
\Psi(x)^\top & u^\top
\end{bmatrix}^{\top}.
\]
Since $\mathcal{M}_*$ is an optimal LPV matrix, the first-order optimality
condition gives, for $\mu_\xi$-almost every $\xi$,
\begin{equation}
\label{eq:M_minima}
\mathbb{E}_{(x,u)\sim\nu}
\left[
\left(
\Psi\!\left(f(x,u,\xi)\right)
-
\mathcal{M}_*(\xi)q(x,u)
\right)
q(x,u)^\top
\right]
=
0.
\end{equation}
Equations~\eqref{eq:integrated_population_error_M} and~\eqref{eq:M_minima}
then give
\begin{align}
\mathcal{J}(\mathcal{M})
&=
\mathcal{J}(\mathcal{M}_*)\nonumber\\
&\quad+
\mathbb{E}_{\substack{\xi \sim \mu_\xi\\(x,u)\sim\nu}}
\left[
\left\|
\left(
\mathcal{M}(\xi)-\mathcal{M}_*(\xi)
\right)q(x,u)
\right\|_2^2
\right],
\label{eq:objective_decomposition}
\end{align}
where the cross term vanishes by \eqref{eq:M_minima}.

Define
\begin{equation}
\label{eq:truncated_pce}
A_p(\xi)
=
\sum_{k=0}^{N_p-1}
A_{*,k}\Phi_k^\xi(\xi),
\qquad
B_p(\xi)
=
\sum_{k=0}^{N_p-1}
B_{*,k}\Phi_k^\xi(\xi),
\end{equation}
as the truncated PCE approximations of $A_*$ and $B_*$, respectively, and
let
\[
\mathcal{M}_p(\xi)
=
\begin{bmatrix}
A_p(\xi) & B_p(\xi)
\end{bmatrix}.
\]
Since the PCE basis
$\{\Phi_k^\xi\}_{k\in\mathbb{N}_0}$ is complete in
$L^2(\Gamma_\xi,\mu_\xi)$, we have
\begin{equation}
\lim_{p\to\infty}
\mathbb{E}_{\xi\sim\mu_\xi}
\left[
\left\|
\mathcal{M}_*(\xi)-\mathcal{M}_p(\xi)
\right\|_F^2
\right]
=
0.
\end{equation}
It follows from \eqref{eq:objective_decomposition} that
\begin{align}
0
&\leq
\mathcal{J}(\mathcal{M}_p)
-
\mathcal{J}(\mathcal{M}_*)
\nonumber\\
&=
\mathbb{E}_{\substack{\xi \sim \mu_\xi\\(x,u)\sim\nu}}
\left[
\left\|
\left(
\mathcal{M}_p(\xi)-\mathcal{M}_*(\xi)
\right)q(x,u)
\right\|_2^2
\right]
\nonumber\\
&\leq
\mathbb{E}_{\xi\sim\mu_\xi}
\left[
\left\|
\mathcal{M}_p(\xi)-\mathcal{M}_*(\xi)
\right\|_F^2
\right]
\mathbb{E}_{(x,u)\sim\nu}
\left[
\left\|q(x,u)\right\|_2^2
\right].
\end{align}
Therefore,
\begin{equation}
\lim_{p\to\infty}
\mathcal{J}(\mathcal{M}_p)
=
\mathcal{J}(\mathcal{M}_*).
\end{equation}

Finally, since $\mathcal{M}_{p,*}$ is optimal over the set of order-$p$
PPKOs and $\mathcal{M}_p$ is an admissible order-$p$ PPKO,
\begin{equation}
\mathcal{J}(\mathcal{M}_*)
\leq
\mathcal{J}(\mathcal{M}_{p,*})
\leq
\mathcal{J}(\mathcal{M}_p).
\end{equation}
Taking the limit as $p\to\infty$ and applying the squeeze theorem yields
\begin{equation}
\lim_{p\to\infty}
\mathcal{J}(\mathcal{M}_{p,*})
=
\mathcal{J}(\mathcal{M}_*).
\end{equation}
\end{proof}
The above theorem shows that although parameterizing the PPKO matrices as a finite-order PCE incurs an additional single-step prediction error, it can be mitigated by increasing the degree of the approximation. We further remark that, empirically, only a low truncation degree is required (between 2--5), because additional regularity properties of the observables and dynamics cause subexponential decay of the PCE coefficients of $A_*$ and $B_*$.
\subsection{PPKO Training using Extended Dynamic Mode Decomposition with Dictionary Learning}
Having established the PPKO framework and its approximation properties in the preceding subsections, we now describe how the PPKO coefficient matrices $\{A_k\}_{k=0}^{N_p-1}$ and $\{B_k\}_{k=0}^{N_p-1}$ are identified from sampled one-step transition data. The proposed training procedure also incorporates Extended Dynamic Mode Decomposition with Dictionary Learning (EDMD-DL) \cite{li2017extended} to jointly optimize the lifted observables with the PCE matrices, thereby improving prediction accuracy.

Analogously to the conventional EDMD-DL approach, we simulate the nonlinear
system~\eqref{eq:nonlinear} to obtain one-step transition data, either by independently sampling $(x,u)\sim\nu$ and
$\xi=(w,\theta)\sim\mu_\xi$, or by extracting adjacent transitions from simulated trajectories,
$\{(x_j,u_j,\xi_j),x_{j,+}\}_{j=1}^{M}$, where $M$ is the total number of
samples, $\xi_j=(w_j,\theta_j)$, and
$
x_{j,+}=f(x_j,u_j,\xi_j).
$
For trajectory data, $\theta$ remains fixed within each trajectory and $w$ is redrawn at every step; the transitions need not be independent.
We further assume that the observable dictionary $\Psi(x;w_\Psi)$ is
parameterized by neural-network weights $w_\Psi$. The PPKO coefficient
matrices $\{A_k\}_{k=0}^{N_p-1}$ and $\{B_k\}_{k=0}^{N_p-1}$, together with the
observable weights $w_\Psi$, are then identified by minimizing the empirical
least-squares loss
\begin{scriptsize}
\begin{equation}
\label{eq:ppko_control_ls}
\begin{aligned}
\min_{\substack{
    \{A_k\}_{k=0}^{N_p-1},\\
    \{B_k\}_{k=0}^{N_p-1}, \\
    w_\Psi
}}
\;
&\frac{1}{M}
\sum_{j=1}^{M}
\Bigg\|
\Psi(x_{j,+};w_\Psi)
-
\Bigg(
\sum_{k=0}^{N_p-1}
A_k\Phi_k^\xi(\xi_j)
\Bigg)
\Psi(x_j;w_\Psi)
\\
&\hspace{5.2em}
-
\Bigg(
\sum_{k=0}^{N_p-1}
B_k\Phi_k^\xi(\xi_j)
\Bigg)u_j
\Bigg\|_2^2
\end{aligned}
\end{equation}
\end{scriptsize}
Under independent sampling from $\nu\otimes\mu_\xi$ and for a fixed $w_\Psi$, the loss function in
\eqref{eq:ppko_control_ls} is the empirical counterpart of the population
single-step prediction error $\mathcal{J}$ defined in
\eqref{eq:integrated_population_error}. For trajectory-derived data, the loss instead is the empirical counterpart of the expected single-step prediction error under the joint sampling distribution induced by the trajectories. We further note that for a fixed $w_\Psi$, the least-squares problem \eqref{eq:ppko_control_ls} is linear and convex in $\{A_k\}_{k=0}^{N_p-1}, \{B_k\}_{k=0}^{N_p-1}$ and can be solved at the end of every epoch. After optimizing $w_\Psi$ with fixed $\{A_k\}_{k=0}^{N_p-1}, \{B_k\}_{k=0}^{N_p-1}$, least-squares regression is performed to determine the PCE coefficients $\{A_k\}_{k=0}^{N_p-1}, \{B_k\}_{k=0}^{N_p-1}$. This alternating minimization method converges rapidly in practice \cite{li2017extended}. The loss in \eqref{eq:ppko_control_ls} can also be adapted to include multi-step prediction errors to improve prediction accuracy over the SMPC horizon. Once the PPKO coefficient matrices
$\{A_k\}_{k=0}^{N_p-1}$ and $\{B_k\}_{k=0}^{N_p-1}$, together with the observable
weights $w_\Psi$, have been identified for the PPKO model
\eqref{eq:controlled_ppko_pce}, the resulting model can be used to propagate
the nonlinear system \eqref{eq:nonlinear} in lifted coordinates according to
\eqref{eq:controlled_ppko}. This enables the condensed deterministic
reformulation of the SMPC problem \eqref{eq:smpc_problem}, presented in the
next section.
\section{Condensed Deterministic SMPC Reformulation}\label{sec: condensed}

Once a PCE LPV model such as \eqref{eq:controlled_ppko} is identified, the conventional approach for deterministic SMPC reformulation is to apply Stochastic Galerkin Projections (SGPs) to \eqref{eq:smpc_problem}, which results in a deterministic expanded system for the PCE coefficients. The conventional SGP approach increases the state dimension proportionately with the number of retained PCE polynomials, which increases the computational burden of the deterministic reformulation of SMPC. SGPs also induce additional truncation error during uncertainty propagation and the errors may propagate along the prediction horizon. In this section, inspired by condensing methods in MPC, we propose a deterministic reformulation approach that explicitly avoids the use of SGPs, which eliminates truncation errors and yields an online optimization whose decision vector is the stacked input sequence, with dimension depending only on the control horizon and input dimension.
\subsection{PPKO-SMPC}
Consider the nonlinear SMPC problem given by \eqref{eq:smpc_problem}. By approximating the nonlinear dynamics \eqref{eq:smpc-dyn} with the PPKO model \eqref{eq:controlled_ppko}, we obtain the following PPKO-SMPC problem:

\noindent
\begin{minipage}{\columnwidth}
\noindent\rule{\columnwidth}{0.4pt}
\vspace{-0.7em}
\textbf{PPKO-SMPC}
\vspace{-0.4em}

{\small
\setlength{\abovedisplayskip}{2pt}
\setlength{\belowdisplayskip}{2pt}
\setlength{\jot}{2pt}
\begin{subequations}\label{eq:PPKO_smpc_problem}
\begin{align}
\min_{\{u_t\}_{t=0}^{H-1}} 
&~ \sum_{t=0}^{H-1} 
\mathbb{E}\big[x_t^\top Q x_t + u_t^\top R u_t \big]
+ \mathbb{E}\big[x_H^\top Q_f x_H\big]
\label{eq:ppko-smpc-cost}
\\
\text{s.t.}~
& z_0 = \Psi(x_0),
\label{eq:ppko-smpc-lifting}
\\
& z_{t+1} = A_p(\xi_t)z_t + B_p(\xi_t) u_t,
\quad t=0,\dots,H-1,
\label{eq:ppko-smpc-dyn}
\\
& x_t = C z_t,
\quad t=0,\dots,H,
\label{eq:ppko-smpc-output}
\\
& x_t^{\min} \le \mathbb{E}[x_t] \le x_t^{\max},
\quad t=1,\dots,H,
\label{eq:ppko-smpc-mean}
\\
& \mathbb{E}\big[(a_t^\top x_t-b_t)^2\big] \le c_t^2,
\quad t=1,\dots,H,
\label{eq:ppko-2ndorderconstraints}
\\
& u_{\min} \le u_t \le u_{\max},
\quad t=0,\dots,H-1.
\label{eq:ppko-smpc-input}
\end{align}
\end{subequations}
}
\par\noindent\rule{\columnwidth}{0.4pt}\par
\end{minipage}

Conventional approaches to stochastic programs with expectation constraints, such as \eqref{eq:PPKO_smpc_problem}, rely on sampling-based approximations, including sample-average and scenario methods, or explicit propagation of state moments using methods such as SGPs. Their online computational burden can grow substantially with the dimension of the observable dictionary $\Psi$ and the number of PCE basis functions, which itself suffers from the curse of dimensionality. The following theorem shows that \eqref{eq:PPKO_smpc_problem} can instead be condensed into a deterministic optimization problem in the stacked input vector by precomputing a set of moment matrices offline.
\begin{theorem}
\label{thm:condensed_ppko_smpc}
Consider the PPKO-SMPC problem \eqref{eq:PPKO_smpc_problem} under the
affine-in-control PPKO approximation \eqref{eq:ppko-smpc-dyn}. Define
the lifted initial condition
$
z_0:=\Psi(x_0)
$
and the stacked input vector
$
U
:=
[
u_0^\top ,\cdots , u_{H-1}^\top
]^{\top}
$,  
which is the only online decision vector. Let the offline condensed
matrices $\mathbf{S}$, $\bar{\mathbf{G}}_t$, $\mathbf{M}_t$, and
$\mathbf{c}_t$ be partitioned with respect to
$[z_0^\top,U^\top]^\top$ as
\[
\mathbf{S}
=
\begin{bmatrix}
\mathbf{S}_{zz} & \mathbf{S}_{zU}\\
\mathbf{S}_{Uz} & \mathbf{S}_{UU}
\end{bmatrix},
\quad
\bar{\mathbf{G}}_t
=
\begin{bmatrix}
\bar{\mathbf{E}}_t & \bar{\mathbf{F}}_t
\end{bmatrix},
\]
\[
\mathbf{M}_t
=
\begin{bmatrix}
\mathbf{M}_{zz,t} & \mathbf{M}_{zU,t}\\
\mathbf{M}_{Uz,t} & \mathbf{M}_{UU,t}
\end{bmatrix},
\qquad
\mathbf{c}_t
=
\begin{bmatrix}
\mathbf{c}_{z,t}\\
\mathbf{c}_{U,t}
\end{bmatrix}.
\]
Define the initial-state-dependent vectors and scalars
\[
\mathbf{s}(x_0):=\mathbf{S}_{Uz}z_0,\qquad
\mathbf{m}_t(x_0):=\mathbf{M}_{Uz,t}z_0-b_t\mathbf{c}_{U,t},
\]
\[
d_t(x_0):=
z_0^\top\mathbf{M}_{zz,t}z_0
-2b_t\mathbf{c}_{z,t}^\top z_0
+b_t^2.
\]
Then \eqref{eq:PPKO_smpc_problem} is equivalent, up to an
input-independent cost term, to the following deterministic QCQP in \(U\):

\noindent
\begin{minipage}{\columnwidth}
\normalfont
\noindent\rule{\columnwidth}{0.4pt}
\vspace{-0.7em}
\textbf{Condensed PPKO-SMPC}
\vspace{-0.4em}

{\scriptsize
\begin{subequations}
\label{eq:det_smpc_problem}
\begin{align}
\min_{U}\quad
&
U^\top \mathbf{S}_{UU}U+2\mathbf{s}(x_0)^\top U
\label{eq:det-cost}
\\
\mathrm{s.t.}\quad
&
x_t^{\min}
\leq
\bar{\mathbf{E}}_t z_0+\bar{\mathbf{F}}_tU
\leq
x_t^{\max},
\quad
t=1,\ldots,H,
\label{eq:det-1stmoment}
\\
&
U^\top\mathbf{M}_{UU,t}U
+2\mathbf{m}_t(x_0)^\top U
+d_t(x_0)
\leq
c_t^2,
\quad
t=1,\ldots,H,
\label{eq:det-2ndmoment}
\\
&
\mathbf{1}_H\otimes u_{\min}
\leq
U
\leq
\mathbf{1}_H\otimes u_{\max}.
\label{eq:stacked-input-bounds}
\end{align}
\end{subequations}
}

\noindent\rule{\columnwidth}{0.4pt}
\end{minipage}

where the condensed matrices
$\{\mathbf{S},\bar{\mathbf{G}}_t,\mathbf{M}_t,\mathbf{c}_t\}$
are explicitly constructed in the proof and are evaluated online only
through the fixed parameter \(z_0=\Psi(x_0)\) and the decision vector \(U\).
\end{theorem}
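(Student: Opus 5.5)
The plan is to show that every random quantity in \eqref{eq:PPKO_smpc_problem} is a \emph{linear} function of the deterministic vector $\eta:=[z_0^\top,U^\top]^\top$ with random coefficient matrices that do not depend on $\eta$. Expectations of linear and quadratic forms in $z_t$ then reduce to fixed first- and second-moment matrices contracted with $\eta$. Unrolling \eqref{eq:ppko-smpc-dyn} gives
\[
z_t=\Phi_t z_0+\sum_{s=0}^{t-1}\Phi_{t,s+1}B_p(\xi_s)u_s=:\mathbf{G}_t\,\eta,
\]
where $\Phi_{t,s}:=A_p(\xi_{t-1})\cdots A_p(\xi_s)$ and $\Phi_t:=\Phi_{t,0}$. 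The random matrix $\mathbf{G}_t$ has blocks $[\Phi_t,\;\Phi_{t,1}B_p(\xi_0),\ldots,B_p(\xi_{t-1}),0,\ldots,0]$. It is a polynomial in $(w_0,\ldots,w_{t-1},\theta)$ and hence has finite moments. We then define
\[
\bar{\mathbf{G}}_t:=C\,\mathbb{E}[\mathbf{G}_t],
\]
so that $\mathbb{E}[x_t]=\bar{\mathbf{E}}_tz_0+\bar{\mathbf{F}}_tU$. This gives \eqref{eq:det-1stmoment}. The input bounds carry over directly as \eqref{eq:stacked-input-bounds}.

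For the cost we set
\[
\mathbf{S}:=\sum_{t=0}^{H-1}\mathbb{E}[\mathbf{G}_t^\top C^\top QC\mathbf{G}_t]+\mathbb{E}[\mathbf{G}_H^\top C^\top Q_fC\mathbf{G}_H]+\mathrm{blkdiag}(0,I_H\otimes R),
\]
with $\mathbf{G}_0=[I,0]$. The objective is then exactly $\eta^\top\mathbf{S}\eta$. Expanding by the partition and using symmetry ($\mathbf{S}_{zU}=\mathbf{S}_{Uz}^\top$) gives $U^\top\mathbf{S}_{UU}U+2(\mathbf{S}_{Uz}z_0)^\top U+z_0^\top\mathbf{S}_{zz}z_0$. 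The last term does not depend on the input, so it is dropped. For the second-moment constraint, write $a_t^\top x_t-b_t=a_t^\top C\mathbf{G}_t\eta-b_t$ and set
\[
\mathbf{M}_t:=\mathbb{E}[\mathbf{G}_t^\top C^\top a_ta_t^\top C\mathbf{G}_t],\qquad \mathbf{c}_t:=\mathbb{E}[\mathbf{G}_t^\top]C^\top a_t.
\]
Expanding the square gives $\eta^\top\mathbf{M}_t\eta-2b_t\mathbf{c}_t^\top\eta+b_t^2$. Partitioning this reproduces $\mathbf{m}_t(x_0)$ and $d_t(x_0)$ exactly as stated, which yields \eqref{eq:det-2ndmoment}. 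Each of these steps is an exact identity, so the feasible sets agree and the costs differ only by the constant. This establishes the claimed equivalence. Since $\mathbf{S}_{UU}\succeq \mathrm{blkdiag}(I_H\otimes R)\succ0$ and $\mathbf{M}_{UU,t}\succeq0$, both being expectations of Gram-type matrices, the program is a convex QCQP.

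The main obstacle is to show that these moment matrices are genuinely computable offline, without enumerating a tensorized PCE whose dimension grows with the horizon. I would build them recursively, exploiting independence. Since $w_t$ is i.i.d.\ and independent of $\theta$, I would first condition on $\theta$. The recursion $\mathbf{G}_{t+1}=A_p(\xi_t)\mathbf{G}_t+B_p(\xi_t)\mathbf{e}_t$ holds, where $\mathbf{e}_t$ is the deterministic selector extracting $u_t$ from $\eta$. In it, $\xi_t$'s disturbance component $w_t$ is independent of $\mathbf{G}_t$ given $\theta$. The conditional second moments $\mathbb{E}[\mathbf{G}_t^\top W\mathbf{G}_t\mid\theta]$ can then be propagated by a backward Lyapunov-type recursion for each weight $W$. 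Equivalently, the vectorized forward recursion
\[
\mathrm{vec}\,\mathbb{E}[\mathbf{G}_{t+1}\mathbf{G}_{t+1}^\top\mid\theta]
\]
is driven by $\mathbb{E}_w[A_p\otimes A_p]$, $\mathbb{E}_w[A_p\otimes B_p]$ and $\mathbb{E}_w[B_p\otimes B_p]$. These are finite sums of PCE products, given in closed form by triple-product integrals of the orthonormal basis. The remaining dependence on $\theta$ is polynomial, so the outer expectation over $\theta$ is exact via quadrature of sufficient degree. Alternatively, one can carry the $\theta$-PCE coefficients of the conditional moments through the recursion. I expect the bookkeeping of this $\theta$-coupling to be the delicate step: because $\theta$ is shared across time, the moments do not factor across $t$. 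For the equivalence statement itself, however, it suffices that the expectations exist and are fixed matrices independent of $\eta$, which follows from the polynomial structure and the finite moments in Assumption~\ref{ass:pce_property}.
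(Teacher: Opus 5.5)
Your proposal is correct and follows essentially the same route as the paper. In both, $x_t$ is written as a random matrix (the paper's $\mathbf{G}_t$, which already absorbs $C$) times the deterministic vector $\eta=[z_0^\top,U^\top]^\top$. The expectations are then moved onto the matrices to define $\mathbf{S},\bar{\mathbf{G}}_t,\mathbf{M}_t,\mathbf{c}_t$, which are partitioned, and the $U$-independent term is dropped. Your extra $t=0$ cost term only enters $\mathbf{S}_{zz}$, so it is harmless.

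Your sketch of the offline computation anticipates the paper's later propositions and algorithm: a backward kernel recursion for each weight, conditioned on quadrature nodes for $\theta$. One small difference is that once the PPKO matrices are regrouped by $w$-modes, orthonormality of the $w$-basis suffices. Triple-product integrals are not needed there.
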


\begin{proof}
First define the stacked vectors
\[
\begin{aligned}
X&=[x_1^\top,\cdots{},x_H^\top]^\top,\\
U&=[u_0^\top,\cdots{},u_{H-1}^\top]^\top.
\end{aligned}
\]
For notational convenience in constructing the offline matrices, introduce
the auxiliary vector
\[
\eta:=
\begin{bmatrix}
z_0^\top & U^\top
\end{bmatrix}^{\top},
\]
where \(z_0=\Psi(x_0)\) is fixed during the online optimization, and let
\(\mathbf{P}_U=[0\ I_{Hn_u}]\).
Moreover, let
$
\mathbf{w}:=(w_0,\ldots,w_{H-1})
$
denote the disturbance sequence, where the $w_t$ are i.i.d., and recall
that $\xi_t=(w_t,\theta)$, with $\theta$ fixed over the prediction
horizon. Recursively applying \eqref{eq:ppko-smpc-dyn}, and noting that
$z_0=\Psi(x_0)$, results in
\begin{equation}
X
=
\mathbf{E}(\theta,\mathbf{w})z_0
+
\mathbf{F}(\theta,\mathbf{w})U
=
\mathbf{G}(\theta,\mathbf{w})\eta,
\end{equation}
where
\[
\mathbf{G}(\theta,\mathbf{w})
:=
\begin{bmatrix}
\mathbf{E}(\theta,\mathbf{w})
&
\mathbf{F}(\theta,\mathbf{w})
\end{bmatrix},
\]
and
{\scriptsize
\setlength{\arraycolsep}{1.25pt}
\[
\mathbf{E}(\theta,\mathbf{w}) =
\!\begin{bmatrix}
C A_p(\xi_0) \\
C A_p(\xi_1)A_p(\xi_0) \\
\vdots \\
C A_p(\xi_{H-1})\cdots A_p(\xi_0)
\end{bmatrix}\!,
\]
\begin{equation}
\label{eq:E_F_def}
\begin{gathered}
\mathbf{F}(\theta,\mathbf{w}) ={}\\
\begin{bmatrix}
C B_p(\xi_0) & 0 & \cdots & 0 \\
C A_p(\xi_1) B_p(\xi_0) & C B_p(\xi_1) & \cdots & 0 \\
\vdots & \vdots & \ddots & \vdots \\
C \Pi_{k=1}^{H-1} A_p(\xi_k) B_p(\xi_0)
&
C \Pi_{k=2}^{H-1} A_p(\xi_k)B_p(\xi_1)
&
\cdots
&
C B_p(\xi_{H-1})
\end{bmatrix}.
\end{gathered}
\end{equation}
}

Now the SMPC LQR cost \eqref{eq:ppko-smpc-cost} can be rewritten,
up to the constant term $x_0^\top Qx_0$, as
\begin{equation}
\label{eq:pcecost}
\begin{aligned}
J(\eta)
&=
\mathbb{E}
\Big[
(\mathbf{P}_U\eta)^\top
\bar{R}
(\mathbf{P}_U\eta)
\\
&\qquad\qquad
+
\big(
\mathbf{G}(\theta,\mathbf{w})\eta
\big)^\top
\bar{Q}
\big(
\mathbf{G}(\theta,\mathbf{w})\eta
\big)
\Big],
\end{aligned}
\end{equation}
where
\[
\begin{aligned}
\bar{R}&=\mathrm{blkdiag}(R,\cdots{},R),\\
\bar{Q}&=\mathrm{blkdiag}(Q,\cdots{},Q,Q_f).
\end{aligned}
\]
This expression can be further condensed into
\begin{equation}
J(\eta)
=
\eta^\top\mathbf{S}\eta,
\end{equation}
where
\[
\mathbf{S}
:=
\mathbf{P}_U^\top\bar{R}\mathbf{P}_U
+
\mathbb{E}
\!\left[
\mathbf{G}(\theta,\mathbf{w})^\top
\bar{Q}
\mathbf{G}(\theta,\mathbf{w})
\right],
\]
Partitioning $\mathbf{S}$ conformably with $[z_0^\top,U^\top]^\top$
gives
\[
J(U)
=
U^\top\mathbf{S}_{UU}U
+2(\mathbf{S}_{Uz}z_0)^\top U
+z_0^\top\mathbf{S}_{zz}z_0.
\]
The final term is independent of \(U\), and dropping it gives
\eqref{eq:det-cost}.

Furthermore, defining
$\mathbf{E}_t(\theta,\mathbf{w})$ and
$\mathbf{F}_t(\theta,\mathbf{w})$ as the $t$th block rows of
$\mathbf{E}(\theta,\mathbf{w})$ and
$\mathbf{F}(\theta,\mathbf{w})$, respectively, and defining
\begin{equation}\label{eq: Gt def}
\mathbf{G}_t(\theta,\mathbf{w})
:=
\begin{bmatrix}
\mathbf{E}_t(\theta,\mathbf{w})
&
\mathbf{F}_t(\theta,\mathbf{w})
\end{bmatrix},
\end{equation}
we have
\begin{equation}
\begin{aligned}
\mathbb{E}[x_t]
&=
\mathbb{E}
\!\left[
\mathbf{G}_t(\theta,\mathbf{w})\eta
\right]
\\
&=
\bar{\mathbf{G}}_t\eta,
\end{aligned}
\end{equation}
where
\[
\bar{\mathbf{G}}_t
:=
\mathbb{E}
\!\left[
\mathbf{G}_t(\theta,\mathbf{w})
\right]
=
\begin{bmatrix}
\bar{\mathbf{E}}_t
&
\bar{\mathbf{F}}_t
\end{bmatrix},
\]
with
\[
\bar{\mathbf{E}}_t
:=
\mathbb{E}
\!\left[
\mathbf{E}_t(\theta,\mathbf{w})
\right],
\qquad
\bar{\mathbf{F}}_t
:=
\mathbb{E}
\!\left[
\mathbf{F}_t(\theta,\mathbf{w})
\right].
\]
Substituting the fixed initial condition gives
\[
\mathbb{E}[x_t]
=
\bar{\mathbf{E}}_t z_0+\bar{\mathbf{F}}_tU,
\]
which gives \eqref{eq:det-1stmoment}.

For the second-moment constraints, since
$x_t=\mathbf{G}_t(\theta,\mathbf{w})\eta$, we have
\begin{equation}
\mathbb{E}
\!\left[
(a_t^\top x_t-b_t)^2
\right]
=
\eta^\top
\mathbf{M}_t
\eta
-
2b_t\mathbf{c}_t^\top\eta
+
b_t^2,
\end{equation}
where
\begin{align}
\label{eq:2ndmoment}
\mathbf{M}_t
&:=
\mathbb{E}
\!\left[
\mathbf{G}_t(\theta,\mathbf{w})^\top
a_ta_t^\top
\mathbf{G}_t(\theta,\mathbf{w})
\right],
\\
\mathbf{c}_t
&:=
\mathbb{E}
\!\left[
\mathbf{G}_t(\theta,\mathbf{w})^\top
a_t
\right].
\end{align}
Using the block partitions of $\mathbf{M}_t$ and $\mathbf{c}_t$, and
substituting the fixed \(z_0\), this expression becomes
\[
\begin{aligned}
&\mathbb{E}
\!\left[
(a_t^\top x_t-b_t)^2
\right]\\
={}&
U^\top\mathbf{M}_{UU,t}U
+2(\mathbf{M}_{Uz,t}z_0-b_t\mathbf{c}_{U,t})^\top U
\\
&+
z_0^\top\mathbf{M}_{zz,t}z_0
-2b_t\mathbf{c}_{z,t}^\top z_0
+b_t^2.
\end{aligned}
\]
Substituting this expression into \eqref{eq:ppko-2ndorderconstraints}
gives \eqref{eq:det-2ndmoment}.

Finally, the stacked input bounds \eqref{eq:stacked-input-bounds} are
equivalent to \eqref{eq:ppko-smpc-input}. Thus, after substituting the fixed
lifted initial condition \(z_0=\Psi(x_0)\), the only online decision
vector is \(U\). This concludes the reformulation of the PPKO-SMPC
problem \eqref{eq:PPKO_smpc_problem} as the deterministic QCQP
\eqref{eq:det_smpc_problem}.
\end{proof}
The above theorem shows that, by exploiting the linear structure of the
PPKO model and introducing the condensed matrices, the stochastic
PPKO-SMPC problem can be reformulated as the deterministic problem
\eqref{eq:det_smpc_problem}, whose decision-variable dimension is
$Hn_u$. Hence, the decision dimension depends only on the prediction
horizon and input dimension and is independent of both the number of
PCE basis functions and the number of Koopman observables used in the
PPKO model. 

\subsection{Computing the Condensed Matrices}
Although the resulting optimization problem is deterministic, the
remaining computational challenge lies in evaluating the expectations
over the parametric uncertainty and stochastic disturbances
$(\theta,\mathbf{w})$ required to construct the condensed matrices
$\{\mathbf{S},\bar{\mathbf{G}}_t,\mathbf{M}_t,\mathbf{c}_t\}$. A straightforward approach is to
approximate these expectations by sampling many realizations of
$(\theta,\mathbf{w})$. However, since the dimension of the disturbance
trajectory $\mathbf{w}=(w_0,\ldots,w_{H-1})$ grows linearly with the
prediction horizon $H$, this approach can become computationally
burdensome for long horizons. To address this challenge, we propose an
offline recursive moment-condensation algorithm that uses Gaussian
quadrature to integrate over the parametric uncertainty $\theta$ and
exploits both the IID structure of $\{w_t\}$ and the PCE
structure of the PPKO model to efficiently compute the required
condensed matrices. In particular, conditional on \(\theta\), the current realization \(w_t\) is independent of the past and hence independent of \(z_t\), which allows the required conditional expectations to be propagated recursively using averaged one-step operators rather than by explicitly parameterizing the full disturbance path.

\textbf{Summary of Recursive Algorithm :} To provide some intuition on the recursive algorithm to compute the condensed matrices, we provide an overview of the algorithm:

\begin{enumerate}
\item \textbf{Quadrature Construction and Initialization:} Choose an exact Gaussian quadrature rule for the parametric uncertainty $\theta$, with nodes and weights $\left\{ \left(\theta^{(r)},\omega_r\right) \right\}_{r=1}^{N_q}, $ and fix a node $\theta=\theta^{(r)}$.

\item \textbf{Conditional First Moment:} Let $\eta :=
\begin{bmatrix}
z_0^\top & U^\top
\end{bmatrix}^{\top}$ and $\bGamma_t(w_{0:t-1}, \theta)$ be a sequence of random matrices such that for all $t \geq 0$, the matrix $\bGamma_t$ satisfies $z_t = \bGamma_t \eta $.\footnote{We omit the dependence on $(w_{0:t-1}, \theta)$ to simplify notation.} The algorithm computes $\bar \bGamma_t^{(r)} := \mathbb{E}[\bGamma_t \mid\theta = \theta^{(r)}]$ recursively, which enables the computation of the conditional condensed matrix for the expectation constraint \eqref{eq:ppko-smpc-mean} using the following identity:
$$\mathbb{E}[x_t \mid\theta = \theta^{(r)}] = C \bar \bGamma_t^{(r)} \eta.  $$

\item \textbf{Conditional Second Moment:} For any symmetric matrix $S\in\mathbb{S}^{n_z}$, define the conditional quadratic kernel 
\[
\mathcal{M}_t^{(r)}(S) := \mathbb{E} \left[ \boldsymbol{\Gamma}_t^\top S \boldsymbol{\Gamma}_t \,\middle|\, \theta=\theta^{(r)} \right].
\]
Then, for any symmetric matrix $S \in \mathbb{S}^{n_z}$, one has
\begin{equation}\label{eq: Mt def}
\mathbb{E}[z_t^T S z_t \mid\theta = \theta^{(r)}] = \eta^\top \mathcal{M}_t^{(r)}(S) \eta.
\end{equation}
The algorithm then computes the conditional condensed matrices for the quadratic costs \eqref{eq:ppko-smpc-cost} and second-moment constraints \eqref{eq:ppko-2ndorderconstraints} using the following identities:
$$\mathbb{E}[x_t^\top Q x_t \mid\theta = \theta^{(r)}] = \eta^\top \mathcal{M}_t^{(r)}(C^\top Q C) \eta.  $$
$$\mathbb{E}[x_t^\top a a^\top x_t \mid\theta = \theta^{(r)}] = \eta^\top \mathcal{M}_t^{(r)}(C^\top a a^\top C) \eta.  $$

\item \textbf{Quadrature Integration:} We then sum over the quadrature nodes to integrate the conditional moments over $\theta$. Define \[ \bar{\mathbf{G}}_t := \sum_{r=1}^{N_q} \omega_r C\bar{\bGamma}_t^{(r)}, \qquad \overline{\mathcal{M}}_t(S) := \sum_{r=1}^{N_q} \omega_r \mathcal{M}_t^{(r)}(S). \] Since the Gaussian quadrature rule is exact for the corresponding polynomial integrands, these quantities satisfy, for any symmetric matrix $S \in \mathbb{S}^{n_x}$,  \[ \mathbb{E}[x_t] = \bar{\mathbf{G}}_t\eta, \qquad \mathbb{E}[x_t^\top Sx_t] = \eta^\top \overline{\mathcal{M}}_t(C^\top SC) \eta. \]

\item \textbf{Construct Condensed Matrices:} The identities above allow us to construct the required condensed matrices. Specifically, for the second-moment constraints, define \[ \begin{aligned} \mathbf{M}_t &:= \overline{\mathcal{M}}_t \left( C^\top a_ta_t^\top C \right),\\ \mathbf{c}_t &:= \sum_{r=1}^{N_q} \omega_r \left( \bar{\bGamma}_t^{(r)} \right)^\top C^\top a_t. \end{aligned} \] It follows that \[ \mathbb{E} \left[ (a_t^\top x_t-b_t)^2 \right] = \eta^\top\mathbf{M}_t\eta - 2b_t\mathbf{c}_t^\top\eta + b_t^2. \] Similarly, up to the constant initial-state term, the condensed cost matrix is \[ \mathbf{S} := \mathbf{P}_U^\top\bar{R}\mathbf{P}_U + \sum_{t=1}^{H-1} \overline{\mathcal{M}}_t(C^\top QC) + \overline{\mathcal{M}}_H(C^\top Q_fC). \] This gives the condensed matrices $\{\mathbf{S},\bar{\mathbf{G}}_t,\mathbf{M}_t,\mathbf{c}_t\}$ appearing in \eqref{eq:det_smpc_problem}.\end{enumerate}

The remainder of this subsection is organized as follows. First, we provide analytical results on the PCE structure of the PPKO model. Second, we show how the conditional first- and second-moment matrices $\bGamma_t$ and $\mathcal{M}_t^{(r)}(S)$ are
computed recursively. Finally, we present the recursive algorithm for computing the condensed matrices and prove its correctness.

\textbf{Polynomial Chaos Decomposition:} 
We first develop some analytical results to show how the PCE structure of the PPKO model can be utilized. Note that we can write the PPKO matrices in the separable form 
\begin{equation}
\begin{aligned}
A_p(w,\theta)
&=
\sum_{| \alpha | + |\beta | \leq p }
A_{\alpha\beta}\,
\Phi_\alpha^w (w)\Phi_\beta^\theta (\theta),
\\
B_p(w,\theta)
&=
\sum_{| \alpha | + |\beta | \leq p }
B_{\alpha\beta}\,
\Phi_\alpha^w (w)\Phi_\beta^\theta (\theta),
\end{aligned}
\label{eq:AB_joint_pce}
\end{equation}
where $\{\Phi_\alpha^w \}$ and $\{\Phi_\beta^\theta \}$ are the orthonormal bases defined in Assumption \ref{ass:pce_property}. To separate the dependence on the parametric uncertainty and disturbances, by collecting the terms in the joint total-degree-$p$ PCE that share
the same basis function in $w$, the PPKO matrices admit the
representations
\begin{equation}
\label{eq:AB_w_modes}
A_p(w,\theta)
=
\sum_{|\alpha| \leq p }
A_\alpha(\theta)\Phi_\alpha^w (w),
\end{equation}
\begin{equation}
B_p(w,\theta)
=
\sum_{|\alpha| \leq p }
B_\alpha(\theta)\Phi_\alpha^w (w),
\end{equation}
where we define the mode matrices 
\begin{small}\begin{equation}\label{eq:PCE modes}
A_\alpha(\theta) :=  \sum_{|\beta| \leq p - |\alpha| } A_{\alpha\beta} \Phi_\beta^\theta (\theta), \;
B_\alpha(\theta) :=  \sum_{|\beta| \leq p - |\alpha| } B_{\alpha\beta} \Phi_\beta^\theta (\theta)
\end{equation}\end{small}
Here, $\mathbb{E}_w$ averages over a single disturbance $w\sim\mu_w$, with all other arguments held fixed. Conditional state moments are written as $\mathbb{E}[\cdot\mid\theta]$ and average over the disturbance history. By orthonormality of the disturbance basis
$\{\Phi_\alpha^w\}_{\alpha\in\mathbb{N}_0^{n_w}}$,
\begin{equation}
\label{eq:w-basis-orthonormality}
\mathbb{E}_w
\left[
\Phi_\alpha^w(w_t)
\right]
=
\delta_{\alpha,\mathbf{0}},
\qquad
\mathbb{E}_w
\left[
\Phi_\alpha^w(w_t)\Phi_\gamma^w(w_t)
\right]
=
\delta_{\alpha,\gamma},
\end{equation}
where $\mathbf{0}$ denotes the zero multi-index and
$\delta_{\alpha,\gamma}$ is the Kronecker delta for multi-indices. 

\textbf{Conditional First Moments:}
For a fixed realization $\theta = \theta^{(r)}$, to recursively compute the conditional first-moment matrices
$\bGamma_t^{(r)}$, we first define
\begin{align}
\bar A(\theta)
&:=
\mathbb{E}_w[A_p(w,\theta)]
= A_{\mathbf{0}}(\theta),\\
\bar B(\theta)
&:=
\mathbb{E}_w[B_p(w,\theta)]
= B_{\mathbf{0}}(\theta),
\end{align}
where $A_{\mathbf{0}}$ and $B_{\mathbf{0}}$ are defined in
\eqref{eq:PCE modes}. The second equality in each line follows from
\eqref{eq:w-basis-orthonormality}. We then have the following recursion.

\begin{prop}\label{prop: first moment}
The conditional first-moment matrices satisfy
\begin{equation}\label{eq:Gammat_rec}
\begin{aligned}
\bGamma_0^{(r)}
&=
\begin{bmatrix}
I_{n_z} & 0
\end{bmatrix},
\\
\bGamma_{t+1}^{(r)}
&=
A_{\mathbf{0}}(\theta^{(r)})\bGamma_t^{(r)}
+
B_{\mathbf{0}}(\theta^{(r)})\mathbf{P}_{u_t},
\end{aligned}
\end{equation}
where $\mathbf{P}_{u_t}$ is the projection matrix satisfying
$u_t=\mathbf{P}_{u_t}\eta$.
\end{prop}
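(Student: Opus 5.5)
Conditioning on $\theta=\theta^{(r)}$ throughout, I will prove the recursion by induction on $t$ from the pathwise recursion for the random matrices $\bGamma_t$. The key fact is that $w_t$ is independent of the history $w_{0:t-1}$, so the conditional expectation of a product $A_p(w_t,\theta)\bGamma_t$ factorizes.

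First, I fix the pathwise recursion. Since $z_0=\Psi(x_0)=[I_{n_z}\;0]\eta$, I take $\bGamma_0=[I_{n_z}\;0]$, which is deterministic. Then $\bGamma_0^{(r)}=\mathbb{E}[\bGamma_0\mid\theta=\theta^{(r)}]=[I_{n_z}\;0]$, and this is the base case. From \eqref{eq:ppko-smpc-dyn} with $u_t=\mathbf{P}_{u_t}\eta$ we have $z_{t+1}=A_p(w_t,\theta)\bGamma_t\eta+B_p(w_t,\theta)\mathbf{P}_{u_t}\eta$ for every $\eta$. So the matrix
\[
\bGamma_{t+1}:=A_p(w_t,\theta)\bGamma_t+B_p(w_t,\theta)\mathbf{P}_{u_t}
\]
satisfies $z_{t+1}=\bGamma_{t+1}\eta$, and by construction it is a function of $(w_{0:t},\theta)$ only. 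If one prefers to regard $\bGamma_t$ as defined only through $z_t=\bGamma_t\eta$, I will note that this identity holding for all $\eta$ (both $z_0$ and $U$ range freely) determines $\bGamma_t$ uniquely, so the recursion is forced.

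Next comes the conditional expectation step. Given $\theta=\theta^{(r)}$, $\bGamma_t$ depends only on $w_{0:t-1}$, while $w_t$ is independent of $(w_{0:t-1},\theta)$ by the i.i.d.\ and independence assumptions. I will apply the tower property, conditioning first on $(w_{0:t-1},\theta)$:
\[
\mathbb{E}[A_p(w_t,\theta)\bGamma_t\mid\theta]=\mathbb{E}\big[\mathbb{E}_w[A_p(w,\theta)]\,\bGamma_t\mid\theta\big]=A_{\mathbf 0}(\theta)\,\mathbb{E}[\bGamma_t\mid\theta].
\]
The second equality uses $\bar A(\theta)=A_{\mathbf 0}(\theta)$ from \eqref{eq:AB_w_modes} together with the orthonormality relation \eqref{eq:w-basis-orthonormality}. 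In the same way, $\mathbb{E}[B_p(w_t,\theta)\mid\theta]\mathbf{P}_{u_t}=B_{\mathbf 0}(\theta)\mathbf{P}_{u_t}$, since $\mathbf{P}_{u_t}$ is deterministic. Evaluating at $\theta^{(r)}$ then gives \eqref{eq:Gammat_rec}.

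The only delicate point is justifying these manipulations. I need integrability of $A_p(w_t,\theta)\bGamma_t$ so that the conditional expectations exist and the factorization is legitimate. I will check this by induction: $\bGamma_t$ is a finite sum of products of polynomials in independent variables $w_0,\dots,w_{t-1}$, and $\mu_w$ has finite moments of all orders (the PCE setting assumes this), so every entry is integrable. Conditioning on the single point $\theta=\theta^{(r)}$ is harmless, because the matrices are polynomial in $\theta$: I substitute $\theta=\theta^{(r)}$ and take the expectation over $w_{0:t}$ alone, using independence.
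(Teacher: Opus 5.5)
Your proof is correct and follows essentially the same route as the paper: both use the linear lifted recursion together with the conditional independence of $w_t$ from $w_{0:t-1}$ given $\theta$, which lets $\mathbb{E}_w[A_p(w,\theta)]=A_{\mathbf 0}(\theta)$ and $\mathbb{E}_w[B_p(w,\theta)]=B_{\mathbf 0}(\theta)$ be pulled out of the conditional expectation. The differences are presentational. You work directly with the random matrices $\bGamma_t$ rather than with $z_t$, so you avoid the paper's implicit step of reading off a matrix identity from a vector identity that holds for all $\eta$. You also supply integrability and point-conditioning justifications that the paper omits.
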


\begin{proof}
The base case follows directly from the definition of $\eta$. For the
recursion, we first note that
\begin{align}
\mathbb{E}[z_{t+1}\mid\theta]
&=
\mathbb{E}
\left[
A_p(w_t,\theta)z_t+B_p(w_t,\theta)u_t
\mid\theta
\right] \\
&=
A_{\mathbf{0}}(\theta)\mathbb{E}[z_t\mid\theta]
+
B_{\mathbf{0}}(\theta)u_t.
\label{eq: conditional mean}
\end{align}
Importantly, the second equality holds because $z_t$ depends only on the
past disturbances $(w_0,\ldots,w_{t-1})$, while the i.i.d. assumption
implies that the current disturbance $w_t$ is conditionally independent
of $z_t$ given $\theta$. By definition,
\[
\mathbb{E}[z_t\mid\theta=\theta^{(r)}]
=
\bGamma_t^{(r)}\eta.
\]
It therefore follows from \eqref{eq: conditional mean} that
\[
\bGamma_{t+1}^{(r)}\eta
=
A_{\mathbf{0}}(\theta^{(r)})\bGamma_t^{(r)}\eta
+
B_{\mathbf{0}}(\theta^{(r)})\mathbf{P}_{u_t}\eta,
\]
which proves the recursive identity \eqref{eq:Gammat_rec}.
\end{proof}

Having obtained a closed recursion for the conditional first moment, we
next derive the corresponding conditional second-moment recursion, which must account for contributions from all disturbance PCE modes rather than
only the $\mathbf{0}$'th mode.

\textbf{Conditional Second Moments:} Similarly, for a fixed realization $\theta = \theta^{(r)}$, the recursion to compute the conditional quadratic kernels $\mathcal{M}_t^{(r)}(S)$ is given by the following:
\begin{prop}\label{prop: second moments}
For a fixed quadrature node $\theta=\theta^{(r)}$ and any
$S\in\mathbb{S}^{n_z}$, the conditional quadratic kernels satisfy 
\begin{align}
\mathcal{M}_0^{(r)}(S) &= 
\begin{bmatrix}
I_{n_z} & 0
\end{bmatrix}^{\!\top}
S
\begin{bmatrix}
I_{n_z} & 0
\end{bmatrix},
\label{eq:M0_def} \\
\mathcal{M}_{t+1}^{(r)}(S)
&=
\mathcal{M}_t^{(r)}\!\left(\mathcal Q^{(r)}(S)\right)
+
{\bGamma_t^{(r)}}^\top \mathcal N^{(r)}(S) \mathbf{P}_{u_t}  \nonumber \\
&\quad +
\mathbf{P}_{u_t}^\top \mathcal N^{(r)}(S)^\top \bGamma_t^{(r)}
+
\mathbf{P}_{u_t}^\top \mathcal R^{(r)}(S) \mathbf{P}_{u_t} .
\label{eq:Mt_recursion}
\end{align}
where $\bGamma_t^{(r)}$ and $\mathbf{P}_{u_t}$ are defined in Proposition~\ref{prop: first moment}, and the operators $\mathcal Q^{(r)}, \mathcal N^{(r)}, \mathcal R^{(r)}$ are defined by
\begin{align}
\mathcal Q^{(r)}(S)
&:=
\sum_{|\alpha|\leq p } A_\alpha(\theta^{(r)})^\top S A_\alpha(\theta^{(r)}),
\label{eq:Qtheta_def}
\\
\mathcal N^{(r)}(S)
&:=
\sum_{|\alpha|\leq p } A_\alpha(\theta^{(r)})^\top S B_\alpha(\theta^{(r)}),
\label{eq:Ntheta_def}
\\
\mathcal R^{(r)}(S)
&:=
\sum_{|\alpha|\leq p } B_\alpha(\theta^{(r)})^\top S B_\alpha(\theta^{(r)}).
\label{eq:Rtheta_def}
\end{align}
and $A_\alpha, B_\alpha$ are the mode matrices defined in \eqref{eq:PCE modes}.
\end{prop}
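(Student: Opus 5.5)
The plan is to work at the level of the random matrices $\bGamma_t$ rather than the scalar moments, by deriving a pathwise recursion for $\bGamma_t$ and then averaging over the single new disturbance $w_t$.

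\textbf{Pathwise recursion.} From \eqref{eq:ppko-smpc-dyn}, $z_{t+1}=A_p(w_t,\theta)z_t+B_p(w_t,\theta)u_t$. Substituting $z_t=\bGamma_t\eta$ and $u_t=\mathbf{P}_{u_t}\eta$ gives
\[
\bGamma_{t+1}=A_p(w_t,\theta)\bGamma_t+B_p(w_t,\theta)\mathbf{P}_{u_t}.
\]
This identity holds for all $\eta$, because $\bGamma_t$ depends only on $(w_{0:t-1},\theta)$ and not on $U$. The base case $\bGamma_0=[I_{n_z}\ 0]$ follows from $z_0=\mathbf{P}_z\eta$, and it immediately yields \eqref{eq:M0_def}, since $\bGamma_0$ is deterministic.

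\textbf{Expanding the quadratic form.} Form $\bGamma_{t+1}^\top S\bGamma_{t+1}$ and expand it into four terms:
\[
\bGamma_t^\top A_p^\top S A_p\bGamma_t,\qquad
\bGamma_t^\top A_p^\top S B_p\mathbf{P}_{u_t},\qquad
\text{its transpose},\qquad
\mathbf{P}_{u_t}^\top B_p^\top S B_p\mathbf{P}_{u_t}.
\]
Then take the conditional expectation given $\theta=\theta^{(r)}$, using the tower property and conditioning first on $w_{0:t-1}$. Given $\theta$, the variable $w_t$ is independent of $w_{0:t-1}$, and hence of $\bGamma_t$; this uses Assumption~\ref{ass:pce_property} and the i.i.d.\ structure. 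So the inner expectation reduces to $\mathbb{E}_w$ applied to the $w_t$-dependent middle factors:
\[
\mathbb{E}\big[\bGamma_{t+1}^\top S\bGamma_{t+1}\mid w_{0:t-1},\theta\big]
=\bGamma_t^\top\mathbb{E}_w[A_p^\top SA_p]\bGamma_t+\cdots
\]

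\textbf{Evaluating the middle factors.} Insert the mode expansions \eqref{eq:AB_w_modes}. By the second identity in \eqref{eq:w-basis-orthonormality}, $\mathbb{E}_w[\Phi_\alpha^w\Phi_\gamma^w]=\delta_{\alpha\gamma}$, so the cross-mode terms vanish and
\[
\mathbb{E}_w[A_p^\top SA_p]=\mathcal Q^{(r)}(S),\qquad
\mathbb{E}_w[A_p^\top SB_p]=\mathcal N^{(r)}(S),\qquad
\mathbb{E}_w[B_p^\top SB_p]=\mathcal R^{(r)}(S).
\]
These sums are finite, so exchanging the sum and the expectation is immediate. Taking the outer conditional expectation over $w_{0:t-1}$ then does the following to each term. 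The first term becomes $\mathbb{E}[\bGamma_t^\top\mathcal Q^{(r)}(S)\bGamma_t\mid\theta^{(r)}]$, which is $\mathcal M_t^{(r)}(\mathcal Q^{(r)}(S))$ by definition; this requires $\mathcal Q^{(r)}(S)$ to be symmetric, which holds because $S$ is symmetric. The cross terms are linear in $\bGamma_t$, so they become $(\bGamma_t^{(r)})^\top\mathcal N^{(r)}(S)\mathbf{P}_{u_t}$ plus its transpose, where $\bGamma_t^{(r)}$ is the conditional mean from Proposition~\ref{prop: first moment}. The last term is deterministic.

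\textbf{Main obstacle.} The expected sticking point is justifying the conditional independence and the moment finiteness needed for the tower step. I would handle it by noting that $\bGamma_t$ is a polynomial in $(w_{0:t-1},\theta)$. Its conditional moments are therefore finite, since $\mu_w$ has finite moments of all orders, and independence follows from the i.i.d.\ assumption. Everything else is bookkeeping with the orthonormality relations.
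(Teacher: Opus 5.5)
Your proposal is correct and follows essentially the same route as the paper: the same four-term expansion, conditional independence of $w_t$ from the past given $\theta$, and orthonormality of the $w$-basis, which collapses the middle factors into $\mathcal Q^{(r)}(S)$, $\mathcal N^{(r)}(S)$, $\mathcal R^{(r)}(S)$. The only difference is that you work with the random matrices via the pathwise recursion $\bGamma_{t+1}=A_p(w_t,\theta)\bGamma_t+B_p(w_t,\theta)\mathbf{P}_{u_t}$, whereas the paper works with the scalar $\mathbb{E}[z_{t+1}^\top S z_{t+1}\mid\theta]$; your version matches the matrix definition of $\mathcal M_t^{(r)}$ directly, avoids the paper's implicit step of reading off a symmetric matrix from its quadratic form, and adds the integrability justification that the paper omits.
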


\begin{proof}
The base case \eqref{eq:M0_def} follows readily from the definition \eqref{eq: Mt def} and the following:
\[
\mathbb{E}[z_0^\top S z_0 \mid\theta = \theta^{(r)}]= \eta^\top \begin{bmatrix}
I_{n_z} & 0
\end{bmatrix}^{\!\top}
S
\begin{bmatrix}
I_{n_z} & 0
\end{bmatrix} \eta
\]
For the recursion, for each quadrature node $\theta^{(r)}$, we have
\begin{scriptsize}
\begin{align}
&\mathbb{E}
\left[
z_{t+1}^\top S z_{t+1}
\,\middle|\,
\theta=\theta^{(r)}
\right]
\nonumber\\
&=
\mathbb{E}
\left[
z_t^\top A_p(w_t,\theta)^\top S A_p(w_t,\theta) z_t
\,\middle|\,
\theta=\theta^{(r)}
\right]
\nonumber\\
&\quad+
\mathbb{E}
\left[
\begin{aligned}
&z_t^\top A_p(w_t,\theta)^\top S B_p(w_t,\theta) u_t\\
&\quad+
u_t^\top B_p(w_t,\theta)^\top S A_p(w_t,\theta)z_t
\end{aligned}
\,\middle|\,
\theta=\theta^{(r)}
\right]
\nonumber\\
&\quad+
u_t^\top
\mathbb{E}_w
\left[
B_p(w,\theta^{(r)})^\top S B_p(w,\theta^{(r)})
\right]
u_t.
\label{eq:quadratic_moment_expansion}
\end{align}
\end{scriptsize}

For the first term in \eqref{eq:quadratic_moment_expansion}, because $z_t$ depends only on the
past disturbances $(w_0,\ldots,w_{t-1})$, the i.i.d. assumption
implies that $A_p(w_t,\theta)$ is conditionally independent
of $z_t$ given $\theta$. Thus, by using the definition of $\mathcal{M}_t$ in \eqref{eq: Mt def}, the first term is equal to
\begin{equation}
\eta ^\top \mathcal M _t^{(r)} \left(\mathbb{E}_w[A_p(w,\theta^{(r)})^\top S A_p(w,\theta^{(r)})]\right) \eta
\end{equation}
Now using the mode decomposition of $A_p(w_t,\theta)$ given in \eqref{eq:PCE modes} and the orthonormality identity \eqref{eq:w-basis-orthonormality}, we obtain
\begin{small}
\begin{align}
& \mathbb{E}_w[A_p(w,\theta^{(r)})^\top S A_p(w,\theta^{(r)})] \nonumber \\
&=   \sum_{|\alpha|\leq p } A_\alpha(\theta^{(r)})^\top S A_\alpha(\theta^{(r)}) = \mathcal{Q}^{(r)}(S)\nonumber
\end{align}
\end{small}
So the first term in \eqref{eq:quadratic_moment_expansion} is equal to $\eta^\top \mathcal{M}_t^{(r)}\!\left(\mathcal Q^{(r)}(S)\right) \eta$.
A similar argument shows that the third term in \eqref{eq:quadratic_moment_expansion} is
\begin{align}
&u_t^\top
\mathbb{E}_w
\left[
B_p(w,\theta^{(r)})^\top S B_p(w,\theta^{(r)})
\right]
u_t
\nonumber\\
&=
u_t^\top \mathcal R^{(r)}(S)u_t
=
\eta^\top
\mathbf{P}_{u_t}^\top
\mathcal R^{(r)}(S)
\mathbf{P}_{u_t}
\eta,
\label{eq:quadratic_input_term}
\end{align}
where we have used $u_t=\mathbf{P}_{u_t}\eta$ and
\begin{align}
&\mathbb{E}_w
\left[
B_p(w,\theta^{(r)})^\top S B_p(w,\theta^{(r)})
\right]
\nonumber\\
&=
\sum_{|\alpha|\leq p}
B_\alpha(\theta^{(r)})^\top
S
B_\alpha(\theta^{(r)})
=
\mathcal R^{(r)}(S).
\label{eq:quadratic_input_expectation}
\end{align}

For the second term in \eqref{eq:quadratic_moment_expansion}, the conditional independence of $w_t$ and $z_t$ given $\theta$ implies
\begin{align}
&\mathbb{E}
\left[
z_t^\top
A_p(w_t,\theta)^\top
S
B_p(w_t,\theta)u_t
\,\middle|\,
\theta=\theta^{(r)}
\right]
\nonumber\\
&=
\mathbb{E}
\left[
z_t
\,\middle|\,
\theta=\theta^{(r)}
\right]^\top
\mathbb{E}_w
\left[
A_p(w,\theta^{(r)})^\top
S
B_p(w,\theta^{(r)})
\right]
u_t
\nonumber\\
&=
\eta^\top
{\bGamma_t^{(r)}}^\top
\mathcal N^{(r)}(S)
\mathbf{P}_{u_t}
\eta,
\label{eq:quadratic_cross_term_1}
\end{align}
where Proposition~\ref{prop: first moment} gives
$
\mathbb{E}[z_t\mid\theta=\theta^{(r)}]
=
\bGamma_t^{(r)}\eta,
$
and the mode decomposition in \eqref{eq:PCE modes}, together with
\eqref{eq:w-basis-orthonormality}, gives
\begin{align}
&\mathbb{E}_w
\left[
A_p(w,\theta^{(r)})^\top
S
B_p(w,\theta^{(r)})
\right]
\nonumber\\
&=
\sum_{|\alpha|\leq p}
A_\alpha(\theta^{(r)})^\top
S
B_\alpha(\theta^{(r)})
=
\mathcal N^{(r)}(S).
\label{eq:quadratic_cross_expectation}
\end{align}
Similarly,
\begin{align}
&\mathbb{E}
\left[
u_t^\top
B_p(w_t,\theta)^\top
S
A_p(w_t,\theta)z_t
\,\middle|\,
\theta=\theta^{(r)}
\right]
\nonumber\\
&=
\eta^\top
\mathbf{P}_{u_t}^\top
\mathcal N^{(r)}(S)^\top
\bGamma_t^{(r)}
\eta.
\label{eq:quadratic_cross_term_2}
\end{align}

Combining the preceding expressions in
\eqref{eq:quadratic_moment_expansion} yields
\begin{align}
&\mathbb{E}
\left[
z_{t+1}^\top S z_{t+1}
\,\middle|\,
\theta=\theta^{(r)}
\right]
\nonumber\\
&=
\eta^\top
\Big[
\mathcal{M}_t^{(r)}
\!\left(
\mathcal Q^{(r)}(S)
\right)
+
{\bGamma_t^{(r)}}^\top
\mathcal N^{(r)}(S)
\mathbf{P}_{u_t}
\nonumber\\
&\qquad\quad
+
\mathbf{P}_{u_t}^\top
\mathcal N^{(r)}(S)^\top
\bGamma_t^{(r)}
+
\mathbf{P}_{u_t}^\top
\mathcal R^{(r)}(S)
\mathbf{P}_{u_t}
\Big]
\eta.
\label{eq:quadratic_moment_recursion_proof}
\end{align}
Comparing \eqref{eq:quadratic_moment_recursion_proof} with the definition of
$\mathcal M_{t+1}^{(r)}(S)$ in \eqref{eq: Mt def} gives
\eqref{eq:Mt_recursion}, which completes the proof.
\end{proof}
As a corollary of unrolling the recursion formula in Proposition \ref{prop: second moments}, we have the following identity:

\begin{corollary}
\label{cor:Mt_unrolled}
For a fixed quadrature node $\theta=\theta^{(r)}$ and any
$S\in\mathbb{S}^{n_z}$, define the iterated operators
\begin{equation}
\label{eq:Q_iterates}
\begin{aligned}
\mathcal{Q}_0^{(r)}(S)
&:=
S,
\\
\mathcal{Q}_{k+1}^{(r)}(S)
&:=
\mathcal{Q}^{(r)}
\left(
\mathcal{Q}_k^{(r)}(S)
\right),
\quad k\geq 0.
\end{aligned}
\end{equation}
Moreover, define
\begin{align}
\mathcal{D}_j^{(r)}(X)
:={}&
{\bGamma_j^{(r)}}^\top
\mathcal{N}^{(r)}(X)
\mathbf{P}_{u_j}
+
\mathbf{P}_{u_j}^\top
\mathcal{N}^{(r)}(X)^\top
\bGamma_j^{(r)}
\nonumber\\
&+
\mathbf{P}_{u_j}^\top
\mathcal{R}^{(r)}(X)
\mathbf{P}_{u_j}.
\label{eq:Djr_def}
\end{align}
Then, for $t=0,\ldots,H$, the conditional quadratic kernel admits
the unrolled representation
\begin{equation}
\label{eq:Mt_unrolled}
\mathcal{M}_t^{(r)}(S)
=
\mathbf{P}_z^\top
\mathcal{Q}_t^{(r)}(S)
\mathbf{P}_z
+
\sum_{j=0}^{t-1}
\mathcal{D}_j^{(r)}
\left(
\mathcal{Q}_{t-1-j}^{(r)}(S)
\right),
\end{equation}
where
$
\mathbf{P}_z
:=
\begin{bmatrix}
I_{n_z} & 0
\end{bmatrix},
$
and the summation in \eqref{eq:Mt_unrolled} is interpreted as zero
when $t=0$.
\end{corollary}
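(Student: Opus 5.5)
We will prove \eqref{eq:Mt_unrolled} by induction on $t$, treating the recursion \eqref{eq:Mt_recursion} of Proposition~\ref{prop: second moments} as an identity that holds for \emph{every} symmetric argument $S$. With the operator $\mathcal{D}_j^{(r)}$ of \eqref{eq:Djr_def}, that recursion reads compactly as
\[
\mathcal{M}_{t+1}^{(r)}(S)=\mathcal{M}_t^{(r)}\!\left(\mathcal{Q}^{(r)}(S)\right)+\mathcal{D}_t^{(r)}(S).
\]
The base case $t=0$ is immediate: $\mathcal{Q}_0^{(r)}(S)=S$, the sum is empty, and \eqref{eq:M0_def} is exactly $\mathbf{P}_z^\top S\mathbf{P}_z$.

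For the inductive step, we assume \eqref{eq:Mt_unrolled} holds at time $t$ for all symmetric $S$. We then apply it with the argument $S':=\mathcal{Q}^{(r)}(S)$, which is symmetric because $\mathcal{Q}^{(r)}$ is a congruence sum. The key observation is the semigroup identity $\mathcal{Q}_k^{(r)}(\mathcal{Q}^{(r)}(S))=\mathcal{Q}_{k+1}^{(r)}(S)$. It follows from \eqref{eq:Q_iterates} by a trivial induction on $k$, since the iterates are just repeated compositions of one map. This gives
\[
\mathcal{M}_t^{(r)}\!\left(\mathcal{Q}^{(r)}(S)\right)=\mathbf{P}_z^\top\mathcal{Q}_{t+1}^{(r)}(S)\mathbf{P}_z+\sum_{j=0}^{t-1}\mathcal{D}_j^{(r)}\!\left(\mathcal{Q}_{t-j}^{(r)}(S)\right).
\]
Adding $\mathcal{D}_t^{(r)}(S)=\mathcal{D}_t^{(r)}(\mathcal{Q}_0^{(r)}(S))$ supplies the missing $j=t$ term. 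The sum then becomes $\sum_{j=0}^{t}\mathcal{D}_j^{(r)}(\mathcal{Q}_{(t+1)-1-j}^{(r)}(S))$, which is \eqref{eq:Mt_unrolled} at $t+1$.

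The only point needing care is that the induction hypothesis must be quantified over all $S\in\mathbb{S}^{n_z}$, because it is invoked at the transformed argument $\mathcal{Q}^{(r)}(S)$ rather than at $S$ itself. The $\bGamma_j^{(r)}$ and $\mathbf{P}_{u_j}$ appearing in $\mathcal{D}_j^{(r)}$ depend only on $j$ and not on $S$, so reindexing the sum preserves them. Restricting to $t\le H$ just keeps $\mathbf{P}_{u_j}$ defined for $j\le H-1$.
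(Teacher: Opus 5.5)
Your proof is correct and matches the paper's argument: induction on $t$, with the hypothesis applied at the argument $\mathcal{Q}^{(r)}(S)$, the shift $\mathcal{Q}_k^{(r)}(\mathcal{Q}^{(r)}(S))=\mathcal{Q}_{k+1}^{(r)}(S)$, and the new term $\mathcal{D}_t^{(r)}(S)$ absorbed as the $j=t$ summand. Your remarks that the hypothesis must hold for all symmetric $S$, and that $\mathcal{Q}^{(r)}(S)$ stays symmetric, spell out what the paper leaves implicit.
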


\begin{proof}
The case $t=0$ follows directly from \eqref{eq:M0_def}. Suppose that
\eqref{eq:Mt_unrolled} holds at time $t$. Applying the induction
hypothesis with $\mathcal{Q}^{(r)}(S)$ in place of $S$ and using
\eqref{eq:Mt_recursion} gives
\begin{align}
\mathcal{M}_{t+1}^{(r)}(S)
={}&
\mathbf{P}_z^\top
\mathcal{Q}_t^{(r)}
\left(
\mathcal{Q}^{(r)}(S)
\right)
\mathbf{P}_z
\nonumber\\
&+
\sum_{j=0}^{t-1}
\mathcal{D}_j^{(r)}
\left(
\mathcal{Q}_{t-1-j}^{(r)}
\left(
\mathcal{Q}^{(r)}(S)
\right)
\right)
+
\mathcal{D}_t^{(r)}(S)
\nonumber\\
={}&
\mathbf{P}_z^\top
\mathcal{Q}_{t+1}^{(r)}(S)
\mathbf{P}_z
+
\sum_{j=0}^{t}
\mathcal{D}_j^{(r)}
\left(
\mathcal{Q}_{t-j}^{(r)}(S)
\right),
\end{align}
which is \eqref{eq:Mt_unrolled} at time $t+1$.
\end{proof}

Corollary~\ref{cor:Mt_unrolled} allows the operator iterates
$\{\mathcal{Q}_k^{(r)}(S)\}_{k=0}^{H}$ to be computed first, and then
substituted into the finite sum in \eqref{eq:Mt_unrolled}.

\textbf{Recursive Condensation Algorithm:}
The preceding analyses provide recursive formulas for the
conditional first- and second-moment quantities at each quadrature node
$\theta^{(r)}$. Combining these recursions with the exact Gaussian
quadrature rule yields Algorithm~\ref{alg:recursive_moment_condensation} for constructing
the condensed matrices
$\{\mathbf{S},\bar{\mathbf{G}}_t,\mathbf{M}_t,\mathbf{c}_t\}$.

\begin{algorithm}[t]
\caption{Offline Condensed Matrices Computation}
\label{alg:recursive_moment_condensation}
\footnotesize
\begin{algorithmic}[1]
\Require PCE modes $\{A_\alpha(\theta),B_\alpha(\theta)\}_{|\alpha|\le p}$
\Require Quadrature nodes and weights $\{(\theta^{(r)},\omega_r)\}_{r=1}^{N_q}$
\Require Horizon $H$; output/cost matrices $C,Q,Q_f,R$; directions $\{a_t\}_{t=1}^{H}$
\Require Selector matrices $\mathbf P_U$ and $\{\mathbf P_{u_t}\}_{t=0}^{H-1}$
\Ensure $\mathbf{S},\{\bar{\mathbf{G}}_t,\mathbf{M}_t,\mathbf{c}_t\}_{t=1}^{H}$
\State Set $\mathbf{P}_z=[I_{n_z}\;0]$, $\bar R=I_H\otimes R$
\State Set $X_Q=C^\top QC$, $X_f=C^\top Q_fC$, and $X_{a_t}=C^\top a_ta_t^\top C$
\State Set $\mathcal X_s=\{X_Q,X_f,X_{a_1},\ldots,X_{a_H}\}$
\State Initialize $\bar{\mathbf{G}}_t=0$ and $\overline{\mathcal M}_t(X)=0$
\For{$r=1,\ldots,N_q$}
    \State Evaluate $A_\alpha(\theta^{(r)})$, $B_\alpha(\theta^{(r)})$
    \State Define the operator maps $\mathcal Q^{(r)}$, $\mathcal N^{(r)}$, and $\mathcal R^{(r)}$
    by \eqref{eq:Qtheta_def}--\eqref{eq:Rtheta_def}
    \State $\bGamma_0^{(r)}\gets\mathbf P_z$
    \For{$t=0,\ldots,H-1$}
        \State $\bGamma_{t+1}^{(r)}\gets A_{\mathbf 0}(\theta^{(r)})\bGamma_t^{(r)}+B_{\mathbf 0}(\theta^{(r)})\mathbf P_{u_t}$
    \EndFor
    \State $\bar{\mathbf G}_t\gets\bar{\mathbf G}_t+\omega_rC\bGamma_t^{(r)}$, $t=1,\ldots,H$
    \For{$X\in\mathcal X_s$}
        \State Form $\mathcal Q_k^{(r)}(X)$, $k=0,\ldots,H$, recursively:
        \Statex \hspace{\algorithmicindent}$\mathcal Q_0^{(r)}(X)=X,\quad
        \mathcal Q_{k+1}^{(r)}(X)=\mathcal Q^{(r)}(\mathcal Q_k^{(r)}(X))$
        \State Form $\mathcal M_t^{(r)}(X)$, $t=0,\ldots,H$, by
        \Statex \hspace{\algorithmicindent}$\begin{aligned}
        \mathcal M_t^{(r)}(X)={}&\mathbf P_z^\top\mathcal Q_t^{(r)}(X)\mathbf P_z\\
        &+\sum_{j=0}^{t-1}\mathcal D_j^{(r)}
        \big(\mathcal Q_{t-1-j}^{(r)}(X)\big)
        \end{aligned}$
        \State $\overline{\mathcal M}_t(X)\gets\overline{\mathcal M}_t(X)+\omega_r\mathcal M_t^{(r)}(X)$
    \EndFor
\EndFor
\State $\mathbf M_t\gets\overline{\mathcal M}_t(X_{a_t})$, $\mathbf c_t\gets\bar{\mathbf G}_t^\top a_t$
\State $\mathbf S\gets\mathbf P_U^\top\bar R\mathbf P_U+\sum_{t=1}^{H-1}\overline{\mathcal M}_t(X_Q)+\overline{\mathcal M}_H(X_f)$
\State \Return $\mathbf{S}$ and $\{\bar{\mathbf{G}}_t,\mathbf{M}_t,\mathbf{c}_t\}_{t=1}^{H}$
\end{algorithmic}
\end{algorithm}

Due to the polynomial structure of the PPKO matrices $A_p(w_t,\theta)$ and $B_p(w_t,\theta)$, one may use Gaussian quadrature to accurately compute the expectations in the condensed matrices. Specifically, the following theorem shows that the offline condensation algorithm computes the condensed matrices exactly with a suitable quadrature rule:
\begin{theorem}
If the quadrature rule in $\{(\theta^{(r)},\omega_r)\}_{r=1}^{N_q}$ is exact for all
polynomial integrands of degree at most $2Hp$, then Algorithm~\ref{alg:recursive_moment_condensation} exactly computes the condensed matrices $\mathbf{S}$ and $\{\bar{\mathbf{G}}_t,\mathbf{M}_t,\mathbf{c}_t\}_{t=1}^{H}$.
\end{theorem}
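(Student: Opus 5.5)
The plan is to condition on $\theta$ and then integrate over $\theta$ with the quadrature rule. The conditional quantities computed at each node are exact, by Propositions~\ref{prop: first moment} and~\ref{prop: second moments} and Corollary~\ref{cor:Mt_unrolled}. These hold for every fixed value $\theta$, not only at quadrature nodes, so they define matrix-valued functions $\bar\bGamma_t(\theta):=\mathbb{E}[\bGamma_t\mid\theta]$ and $\mathcal{M}_t(\theta;S):=\mathbb{E}[\bGamma_t^\top S\bGamma_t\mid\theta]$. The algorithm evaluates these functions at $\theta^{(r)}$. The tower property gives
\[
\mathbb{E}[x_t]=\mathbb{E}_\theta\!\left[C\bar\bGamma_t(\theta)\right]\eta,\qquad
\mathbb{E}[x_t^\top Sx_t]=\eta^\top\mathbb{E}_\theta\!\left[\mathcal{M}_t(\theta;C^\top SC)\right]\eta .
\]
Exactness of $\bar{\mathbf G}_t$, $\mathbf M_t$, $\mathbf c_t$ and $\mathbf S$ therefore reduces to showing that the quadrature sums $\sum_r\omega_r(\cdot)(\theta^{(r)})$ equal these $\theta$-expectations entrywise. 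For that it suffices to show that every entry of $\bar\bGamma_t(\theta)$ and $\mathcal{M}_t(\theta;S)$, for $t\le H$, is a polynomial in $\theta$ of degree at most $2Hp$.

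The core step is a degree count, done by induction using the recursions. By \eqref{eq:PCE modes}, each mode matrix $A_\alpha(\theta),B_\alpha(\theta)$ is polynomial in $\theta$ of degree at most $p-|\alpha|\le p$. From \eqref{eq:Gammat_rec}, $\bar\bGamma_0$ is constant, and each step multiplies by $A_{\mathbf 0}(\theta)$ or adds $B_{\mathbf 0}(\theta)\mathbf P_{u_t}$. Hence $\deg\bar\bGamma_t\le tp$.

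For the second moment, $\mathcal Q(S)$, $\mathcal N(S)$ and $\mathcal R(S)$ are, for constant $S$, polynomials of degree at most $2p$. More generally they raise the degree of a polynomial-matrix argument by at most $2p$, so $\deg\mathcal Q_k(S)\le 2kp$. In the unrolled formula \eqref{eq:Mt_unrolled}, the term $\mathbf P_z^\top\mathcal Q_t(S)\mathbf P_z$ has degree at most $2tp$. The term $\mathcal D_j(\mathcal Q_{t-1-j}(S))$ has degree at most
\[
\max\{\,jp+p+2(t-1-j)p,\;2p+2(t-1-j)p\,\}\le 2tp ,
\]
where the $\bGamma_j\mathcal N$ cross term contributes $jp+p$ and the $\mathcal R$ term contributes $2p$. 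So $\deg\mathcal M_t\le 2tp\le 2Hp$ for all $t\le H$, and likewise $\deg\bar\bGamma_t\le Hp$.

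A quicker check of the same bound: conditional on $\theta$ and the disturbances, $\bGamma_t$ is a product of at most $t$ factors, each of $\theta$-degree at most $p$. So $\bGamma_t^\top S\bGamma_t$ has $\theta$-degree at most $2tp$, and taking the expectation over $\mathbf w$ (the $\theta$-independent law $\mu_w^{\otimes t}$) preserves polynomiality and degree. I would present this argument as the main one and use the recursion count as a cross-check.

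Then I would assemble the pieces. By linearity of the quadrature, $\bar{\mathbf G}_t=\sum_r\omega_rC\bar\bGamma_t(\theta^{(r)})=\mathbb{E}[\mathbf G_t]$, and $\overline{\mathcal M}_t(X)=\mathbb{E}[\bGamma_t^\top X\bGamma_t]$ for each $X\in\mathcal X_s$. This gives $\mathbf M_t$ as in \eqref{eq:2ndmoment}, and $\mathbf c_t=\bar{\mathbf G}_t^\top a_t=\mathbb{E}[\mathbf G_t^\top a_t]$ up to the identification $C\bGamma_t\leftrightarrow\mathbf G_t$. For $\mathbf S$, the identity $\mathbb{E}[\mathbf G^\top\bar Q\mathbf G]=\sum_t\mathbb{E}[\mathbf G_t^\top Q_t\mathbf G_t]$ recovers the expression in Theorem~\ref{thm:condensed_ppko_smpc}.

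I expect the main obstacle to be bookkeeping rather than depth. First, the algorithm's $\mathbf S$ sums over $t=1,\dots,H-1$, while the cost also includes $t=0$. That term is $\eta^\top\mathbf P_z^\top C^\top QC\mathbf P_z\eta=x_0^\top Qx_0$, which is constant in $U$, so the two agree up to the input-independent term already dropped. Second, I need to make precise that $\bGamma_t$ maps $\eta$ to $z_t$, while $\mathbf G_t$ maps $\eta$ to $x_t$; they are related by $\mathbf G_t=C\bGamma_t$, and I must check the indexing of the block rows. Third, I need to state clearly that exactness means exactness for the $\mu_\theta$-expectation of polynomials of degree at most $2Hp$, including mixed multivariate degree, which is what the total-degree bound delivers.
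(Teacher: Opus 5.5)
Your proposal is correct and follows essentially the same route as the paper. You get nodewise exactness from the first- and second-moment propositions and the unrolled corollary, bound the $\theta$-degree of the conditional first- and second-moment kernels by $tp$ and $2tp\le 2Hp$, and then invoke quadrature exactness; your extra bookkeeping on $\mathbf S$ and $\mathbf c_t$ fills in details the paper leaves implicit.

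One small slip is in your recursion count. The cross term ${\bGamma_j}^\top\mathcal N(X)\mathbf P_{u_j}$ contributes $jp+2p$ on top of $\deg X$, not $jp+p$, because $\mathcal N$ contains both $A_\alpha$ and $B_\alpha$. The corrected total $(2t-j)p$ is still at most $2tp$, so your conclusion stands, and your product-based check does not have this slip.
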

\begin{proof}
The nodewise computations follow directly from
Propositions~\ref{prop: first moment} and
\ref{prop: second moments}, together with
Corollary~\ref{cor:Mt_unrolled}. Moreover,
$\mathbb{E}[z_t\mid\theta]$ and
$\mathbb{E}[z_t^\top S z_t\mid\theta]$ are polynomials in each
component of $\theta$ of degree at most $tp$ and $2tp$,
respectively. Since $\eta$ does not depend on $\{w_t\}_{t=0}^{H-1}$ and $\theta$, this implies that the integrands in the condensed matrices have degree at most $2Hp$, where $H$ is the SMPC horizon. Thus, all required integrands are integrated exactly by the stated quadrature rule.
\end{proof}

\begin{remark}
If $\theta$ has independent components, then, since an $n$-point Gaussian quadrature rule is exact for all polynomial integrands with degree at most $2n-1$, $Hp+1$ nodes per component are sufficient. This results in a total of $(Hp+1)^{n_\theta}$ nodes in the tensor-product quadrature. Although this number grows exponentially with $n_\theta$, the condensed matrix computations are all done offline, which does not affect the online SMPC computation.

\end{remark}
Algorithm~\ref{alg:recursive_moment_condensation} therefore completes the exact offline construction of the condensed matrices for the finite-dimensional PPKO model. Importantly, no quadrature, Monte Carlo sampling, or stochastic Galerkin projection is performed over the full disturbance trajectory $\{w_t\}_{t=0}^{H-1}$. Instead, the PCE structure of the PPKO model and the IID structure of the disturbances are exploited to propagate the required moments recursively. Consequently, at each sampling instant, only $z_0=\Psi(x_0)$ must be updated before solving \eqref{eq:det_smpc_problem}, whose effective online decision dimension is $Hn_u$.

\section{Numerical Examples}
This section presents numerical results demonstrating the performance of the proposed PPKO-SMPC framework on two stochastic nonlinear dynamical systems: a Duffing oscillator subject to random forcing and a high-dimensional stochastic electrothermal cooling system governed by a coupled ODE--PDE model. These two distinct examples are selected because they exhibit nontrivial closed-loop regulation challenges. In particular, the electrothermal cooling system provides a high-dimensional case study that highlights the computational advantages of our condensed SMPC reformulation. In each case study, we compare the open-loop and closed-loop performance of four different MPC frameworks:
\begin{enumerate}
\item Deterministic EDMD-DL
\item Scenario-based SMPC 
\item Intrusive SGP-based SMPC 
\item PPKO-SMPC (our framework)
\end{enumerate}
The PPKO-SMPC framework and the numerical results can be reproduced using the code available at \url{https://github.com/wallytgy/Koopman-NN}. All computational examples were carried out on a Dell Latitude 7440 laptop equipped with a 13th Gen Intel Core i7-1365U processor and 32 GB of RAM, running Microsoft Windows 11 Enterprise. Reported solve times include online prediction/assembly and optimization, excluding offline setup and Monte Carlo audits.

\subsection{Stochastic Duffing Oscillator}
\label{subsec:duffing_setup}
Consider the controlled Duffing oscillator:
\begin{subequations}\label{eq:duffing_problem}
\begin{align}
\dot{x}_1 &= x_2, \label{eq:duffing_x1}\\
\dot{x}_2 &= -(\delta + 0.3w_{d,k})x_2 - \beta x_1 - \alpha x_1^3 + u_k + 1.5w_{f,k}. \label{eq:duffing_x2}
\end{align}
\end{subequations}
where $x = [x_1\;x_2]^\top \in \mathbb{R}^2$ denotes the states and $u_k \in \mathbb{R}$ is the control input. The parameters $\delta \sim \mathcal{U}[0,1]$, $\beta \sim \mathcal{U}[-2,2]$, and $\alpha \sim \mathcal{U}[0,2]$ are independent uncertain parameters following uniform distributions and remain fixed throughout each trajectory and prediction horizon. The disturbances $w_{d,k}$ and $w_{f,k}$ are i.i.d. $\mathcal N(0,1)$ random variables that are redrawn every $\Delta t = 0.02$ and held constant over each sampling interval.

\subsubsection{Model Identification for PPKO and Baselines}

\textbf{Data generation.}
The continuous-time dynamics in \eqref{eq:duffing_problem} were discretized with a fourth-order Runge--Kutta scheme using sampling interval $\Delta t=0.02$, over 200 time steps per trajectory. For model identification, we generated 400 stochastic trajectories, each with independently sampled uncertain parameters and disturbance realizations. For each trajectory, the initial condition was sampled from $\mathcal U[-1.5,1.5)\times\mathcal U[-1.5,1.5)$, and the inputs were sampled i.i.d. from $\mathcal N(0,0.6^2)$ and applied with zero-order hold over each control interval. The dataset was split at the trajectory level into 280/60/60 train/validation/test trajectories.

\textbf{Koopman observables.}
For both PPKO and the deterministic EDMD-DL baseline, Koopman observables were learned with EDMD-DL using a 28-dimensional lift: one constant observable, two coordinate observables, and 25 learned observables. The feature map was parameterized by a two-hidden-layer fully connected network with 64 neurons per hidden layer and $\tanh$ activations.

\textbf{PPKO identification.}
For PPKO, the polynomial operator maps $A_p(\theta,w)$ and $B_p(\theta,w)$ were represented with a joint total-degree-two Legendre--Hermite polynomial-chaos basis, using Legendre polynomials for the parametric uncertainty and Hermite polynomials for the disturbances. PPKO was trained by alternating minimization: the neural observable parameters were updated by Adam on minibatches, and the Koopman coefficient blocks $\{A_\alpha,B_\alpha\}$ were then updated once per epoch via closed-form ridge least squares on the full training set. Here, a batch size of 2048 refers to 2048 one-step transition samples $(x_k,u_k,\theta,w_k,x_{k+1})$ per Adam update, rather than 2048 trajectory segments. Training used early stopping based on validation prediction error, and the final selected ridge parameter was $10^{-6}$.

\textbf{Baseline models.}
For the deterministic EDMD-DL baseline, a separately learned dictionary of the same architecture was used, and Koopman operator matrices $\{A,B\}$ were fit by ridge regression. For the intrusive SGP-SMPC baseline, we used a total-degree-1 Legendre--Hermite expansion and solved the corresponding deterministic expanded PCE system. The intrusive SGP-SMPC and scenario SMPC baselines are equation-based and therefore do not require neural-network training. Scenario SMPC used 64 fixed parameter/disturbance paths during closed-loop regulation; its NLP and the SGP NLP were solved with IPOPT, while EDMD-DL used CLARABEL.

\subsubsection{Open-loop Performance}
To assess the accuracy of the proposed PPKO model in open-loop prediction, the uncontrolled Duffing oscillator was propagated over a horizon of $H=10$ steps with $\Delta t = 0.02$ from a representative initial condition. The PPKO mean and second moment were compared to Monte~Carlo (MC) ground truth ($N_{\mathrm{mc}} = 5{,}000$ samples) and the baseline models. Fig.~\ref{fig:open_loop_moments} shows that the PPKO predictions closely match the MC reference in both the mean and second-moment trajectories, which is crucial for our SMPC formulation since the constraints use second-moment predictions. Moreover, Fig.~\ref{fig:open_loop_distributions} shows that the PPKO state distributions for $H=10$ also closely match the MC histograms, confirming that the learned Koopman model accurately captures the stochastic open-loop dynamics.

In this case study, the deterministic EDMD-DL baseline fails to capture both the first and second moments accurately, demonstrating the importance of incorporating uncertainty for accurate open-loop prediction. The intrusive SGP baseline also accurately captures the state second moments and empirical distributions. However, the expanded system of PCE coefficients produces a large nonlinear system: the degree-one expansion contains 24 coefficients per state, resulting in a 48-dimensional system for the open-loop horizon considered here. This expansion makes the resulting closed-loop SMPC problem computationally expensive, as discussed in the next section.

\begin{figure}
    \centering
    
    \vspace{0.2cm}
    \includegraphics[width=\columnwidth]{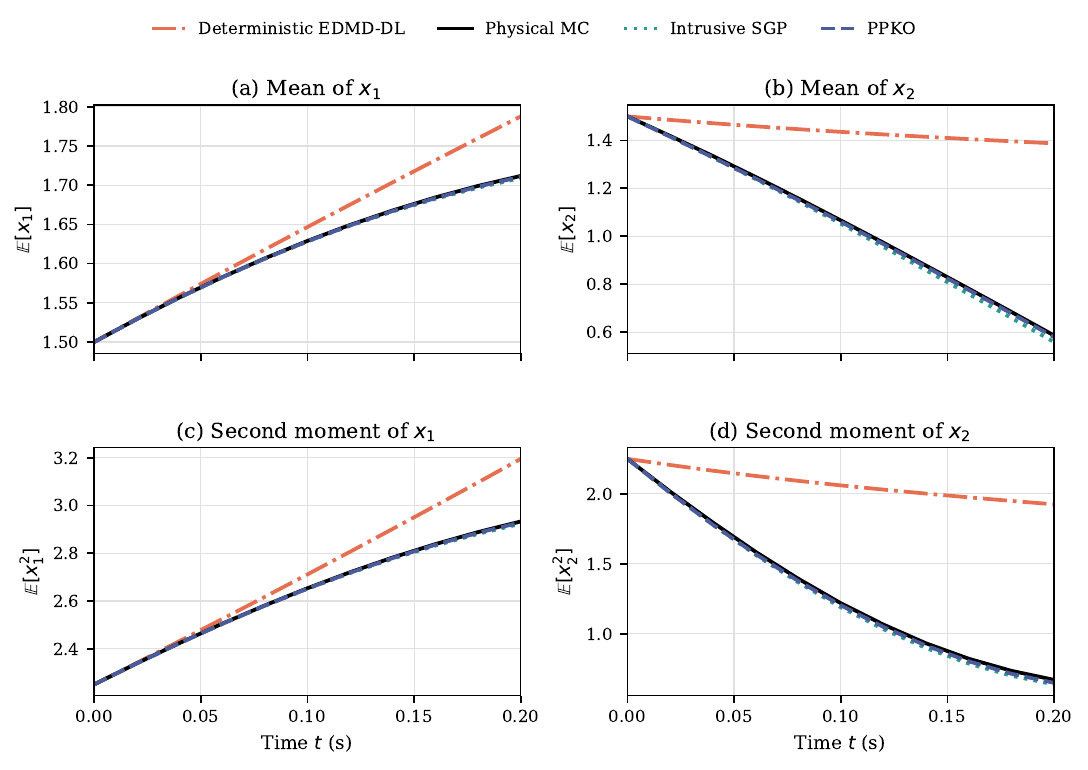}

    \vspace{-0.3cm}
    
    \caption{Open-loop Duffing oscillator state mean and second moments for the representative initial condition $x_0=[1.5,1.5]^\top$: mean and raw second-moment trajectories for the true system, PPKO, intrusive SGP, and deterministic EDMD-DL.}
    \label{fig:open_loop_moments}
\end{figure}

\begin{figure}
    \centering
    
    \vspace{0.2cm}
    \includegraphics[width=\columnwidth]{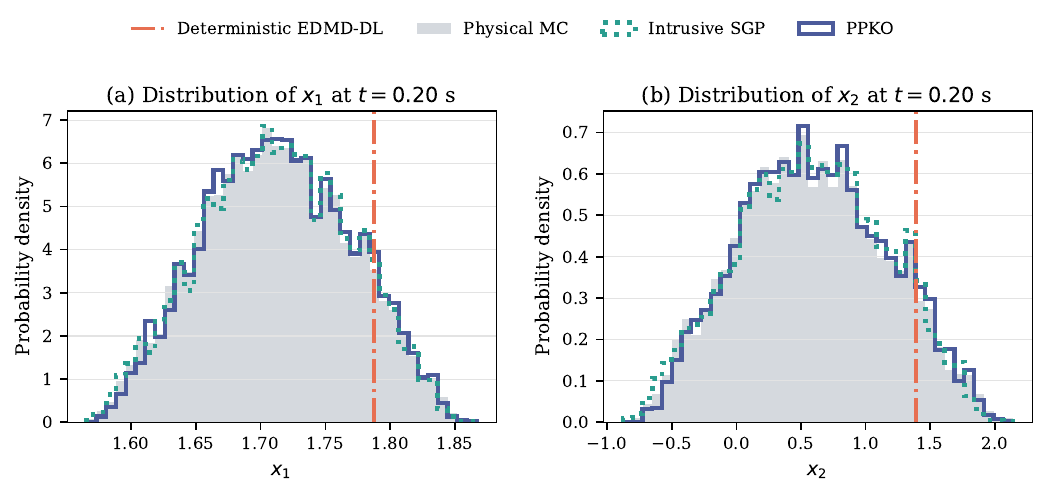}

    \vspace{-0.3cm}
    
    \caption{State distributions of the open-loop Duffing oscillator for $H=10$. The physical Monte~Carlo distributions are compared with distributions sampled from PPKO and intrusive SGP using the same $5{,}000$ parameter and disturbance realizations. The deterministic EDMD-DL predictions are shown as vertical lines.}
    \label{fig:open_loop_distributions}
\end{figure}

\subsubsection{Closed-loop Performance}

\label{subsec:duff_control}

The PPKO-SMPC controller \eqref{eq:det_smpc_problem} was then implemented in receding-horizon form with a prediction horizon of $H=5$. A tensor-product Gauss--Legendre quadrature rule with 11 nodes per uncertain parameter was used to evaluate the condensed matrices \eqref{eq:pcecost}--\eqref{eq:2ndmoment}. The SMPC LQR loss matrices were chosen as $Q=\operatorname{diag}(5,2)$, $Q_f=\operatorname{diag}(200,120)$, and $R=0.05I$. The control inputs were constrained by $-3\leq u_j\leq3$, the predicted state means by $-2\leq\mathbb{E}[x_{i,j}]\leq2$ for $i=1,2$, and the terminal raw second moments by $\mathbb{E}[x_{1,H}^2]\leq3.5$ and $\mathbb{E}[x_{2,H}^2]\leq3.5$. The SMPC dynamic optimization \eqref{eq:det_smpc_problem} was solved at each step using CVXPY \cite{diamond2016cvxpy,agrawal2018rewriting} with the CLARABEL solver \cite{Clarabel_2024} and warm-started by shifting the previous solution forward by one step. The true system was propagated with a fourth-order Runge--Kutta scheme at a sampling time of $\Delta t=0.02$~s.

To evaluate the efficacy of our proposed framework under parametric uncertainty and process disturbances, three representative parameter configurations were selected, each producing qualitatively distinct dynamical behavior, and ten disturbance realizations were applied to each configuration. In every case, the proposed stochastic Koopman--MPC scheme successfully steers the state to a neighborhood of the origin despite the persistent disturbances, demonstrating that the learned Koopman model captures sufficient dynamic structure to enable closed-loop stabilization across all three regimes. For qualitative comparison, uncontrolled open-loop trajectories under the same disturbance realizations and numerically computed equilibria along the line $x_2=0$ are overlaid in the phase portraits of Fig.~\ref{fig:control_duff}.
\begin{figure*}
    \centering
    
    \vspace{0.2cm}
    
    \includegraphics[width = 0.75\linewidth]{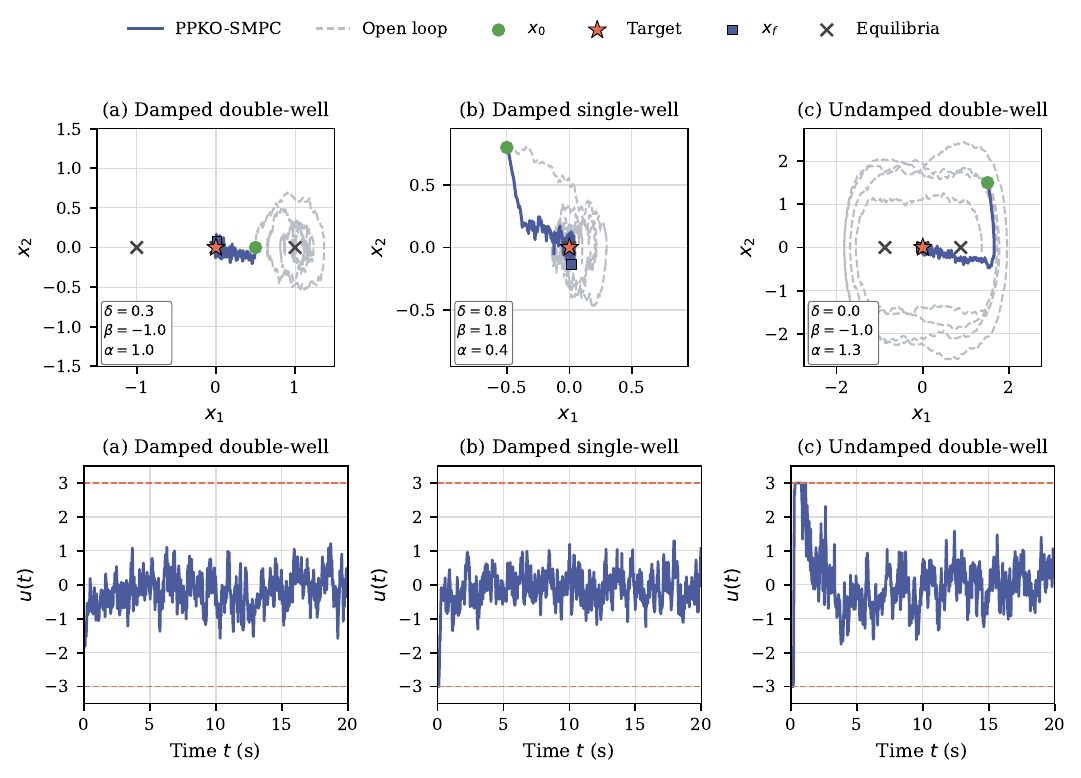}

    \vspace{-0.3cm}
    
    \caption{Closed-loop dynamics and control input profiles for 3 representative parametric uncertainty realizations corresponding to three different open-loop dynamical regimes: damped double-well (left), damped single-well (middle), and undamped double-well (right). Uncontrolled open-loop trajectories (dotted) are shown for comparison. $x_0$ is the initial point, $x_f$ is the final point of the simulation, and the target is the origin (0,0). Equilibria correspond to either a stable point or a saddle point.}
    \label{fig:control_duff}
\end{figure*}

\begin{table}[t]
\centering
\caption{Closed-loop performance metrics for the Duffing oscillator.}
\label{tab:duffing_closed_loop}
\scriptsize
\setlength{\tabcolsep}{2.5pt}
\resizebox{\columnwidth}{!}{%
\begin{tabular}{@{}lcccc@{}}
\hline
Controller & State RMS & Input RMS & Viol. & Mean solve [ms] \\
\hline
EDMD-DL       & 0.0576 & 0.635 & 0/60k & 1.64  \\
Scenario SMPC & 0.0580 & 0.650 & 0/60k & 4.73  \\
SGP-SMPC      & 0.0569 & 0.655 & 0/60k & 23.77 \\
PPKO-SMPC     & 0.0568 & 0.656 & 0/60k & 2.05  \\
\hline
\end{tabular}
}
\end{table}

To evaluate the closed-loop performance of our PPKO framework against the listed baselines, Table~\ref{tab:duffing_closed_loop} reports the state root-mean-square (RMS), input RMS, empirical constraint violations, and mean online solution time across 30 simulations using identical initial conditions, parameter configurations, and disturbance realizations for all controllers. The state RMS is computed over the final \(N_t=200\) samples of each of the \(N=30\) closed-loop trajectories as
\[
\mathrm{RMS}_x=\frac{1}{N}\sum_{r=1}^{N}\sqrt{\frac{1}{N_t}\sum_{k=N_{\mathrm{sim}}-N_t+1}^{N_{\mathrm{sim}}}\left\|x_k^{(r)}\right\|_2^2},
\]
where \(N_{\mathrm{sim}}=1000\). The input RMS is the mean of the per-trajectory RMS values computed using all \(N_{\mathrm{sim}}\) control updates. None of the controllers violate any state constraints empirically in this case study. Notably, although all four controllers achieve successful regulation with comparable state RMS values and similar control effort, our PPKO-SMPC framework achieves a mean solution time comparable to that of deterministic EDMD-DL. In contrast, scenario-based SMPC requires more than twice the mean solution time, while intrusive SGP-SMPC, which solves a nonlinear program involving the expanded system of PCE coefficients, is substantially slower. These results empirically demonstrate the significant computational advantage of learning PCE parameterizations of the Koopman matrices and leveraging the condensing methods in Section~\ref{sec: condensed} to mitigate the online curse of dimensionality when solving the SMPC QP.

\subsection{Stochastic Electrothermal Cooling System}
\label{subsec:electrothermal_setup}
Consider the following stochastic electrothermal PDE--ODE benchmark,
based on heat diffusion and Joule heating adapted from~\cite{casper2019automated,ali2010electrothermal}:
\begin{subequations}\label{eq:electrothermal_problem}
\begin{align}
\partial_t T
&= 0.02(1+0.25\theta_a)\partial_{ss}T-0.1T \nonumber\\
&\quad +0.2\sum_{j=1}^{8}\phi_j(s)R_j[T,\theta_R]i_j^2
-\sum_{\ell=1}^{10}b_\ell(s)u_{\ell,k},\\
\dot{i}
&= -\operatorname{diag}(d)(i-\bar{i})+Gw_k,\\
T(0,t)&=T(1,t)=0.
\end{align}
\end{subequations}
Here, $T(s,t)$ is the temperature profile with spatial domain $s\in[0,1]$,
$i\in\mathbb{R}^{8}$ is the vector representing the heater currents, $u_k := [u_{1,k}, \cdots u_{10,k}]^\top \in\mathbb{R}^{10}$ is the control action containing the cooling inputs, and $w_k \in \mathbb{R}^2$ are the current disturbances. The control actions $u_k$ and disturbances $w_k$ are applied with zero-order hold and remain constant over every sampling interval $k$. The heater and cooling profiles are Gaussian-shaped with parameters
\begin{align}
\phi_j(s)&=\exp\left(-\frac{(s-j/9)^2}{2(0.05)^2}\right),\\
b_\ell(s)&=\exp\left(-\frac{(s-\ell/11)^2}{2(0.06)^2}\right).
\end{align}
The current parameters are $d_j=0.5+(j-1)/7$,
$\bar{i}_j=0.8+0.1\cos(2\pi j/9)$, $G_{j1}=2$, and
$G_{j2}=2\cos(2\pi j/9)$. The heater resistance has a locally averaged temperature dependence given by
\begin{align}
R_j[T,\theta_R]&=(1+0.3\theta_R)\exp(0.4\bar{T}_j),\\
\bar{T}_j&=\frac{\int_0^1\phi_j(s)T(s,t)\,\mathrm{d}s}
{\int_0^1\phi_j(s)\,\mathrm{d}s}.
\end{align}
The uncertainty parameters $\theta_a,\theta_R\sim\mathcal{U}[-1,1]$
are jointly independent and remain fixed throughout each trajectory.
The components of the disturbances $w_k$ are i.i.d.\
$\mathcal{U}[-1,1]$ variables that are independent of the uncertainty
parameters. They are redrawn every $\Delta t=0.1$ and held constant within
each control interval. Discretizing the PDE at 64 interior nodes yields
72 physical states.

\subsubsection{Model Identification for PPKO and Baselines}
To avoid repetition, we omit definitions and some implementation
details already introduced in the previous case study.

\textbf{Data generation.}
Both learned models were initialized using 5,760/1,440/1,440 independent five-step
train/validation/test trajectories with $\Delta t=0.1$, including post-surge current levels. Both were then refined using the same 32,768 independent
one-step transitions and 4,000 five-step training trajectories, with common
1,500-trajectory validation and test datasets.
Simulation used Strang splitting between the linear
diffusion/ambient-loss and nonlinear heating/cooling dynamics,
with RK4 for the latter and four substeps per sampling interval.

\textbf{Koopman observables.}
Both PPKO and deterministic EDMD-DL used 97-dimensional lifts
comprising one constant, 72 state coordinates, and 24 learned
observables, with $\tanh$ activation layers.

\textbf{PPKO identification.}
As in the Duffing study, we parametrized $A_p(\theta,w)$ and
$B_p(\theta,w)$ using a joint total-degree-two Legendre basis in
$(\theta,w) = (\theta_a,\theta_R,w_1,w_2)$ and trained the PPKO models using
the alternating minimization method described in
\cite{li2017extended}, with a ridge regularization of $10^{-6}$ for the PCE
coefficient matrices. Both PPKO and EDMD-DL were further fine-tuned using the same
five-step mean-squared prediction loss over all 72 physical states to improve prediction accuracy over the SMPC horizon.

\textbf{Baseline models.}
EDMD-DL used the same datasets and five-step refinement loss as PPKO, with constant Koopman matrices $A$ and $B$ and a ridge regularization parameter of $10^{-4}$ for the initial operator fit.
Intrusive SGP used a degree-one Legendre expansion with a total of 13 coefficients per physical state.
Both intrusive SGP and scenario-based SMPC are equation-based
and required no neural-network training. Scenario SMPC used 128 fixed parameter/disturbance paths.

\begin{figure}
    \centering
    
    \vspace{0.2cm}
    \includegraphics[width=\columnwidth]{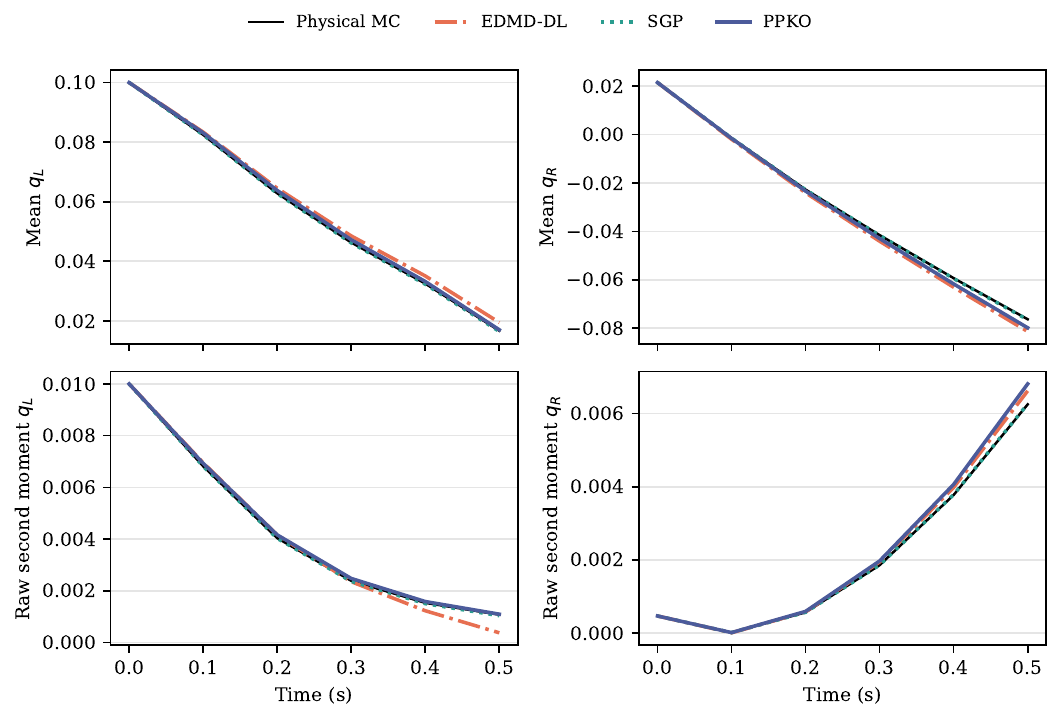}

    \vspace{-0.3cm}
    
    \caption{Open-loop mean and raw second-moment trajectories of the regional temperatures $q_L$ and $q_R$ for the electrothermal system under prescribed cooling inputs.}
    \label{fig:electrothermal_open_loop_moments}
\end{figure}

\begin{figure}
    \centering
    
    \vspace{0.2cm}
    \includegraphics[width=\columnwidth]{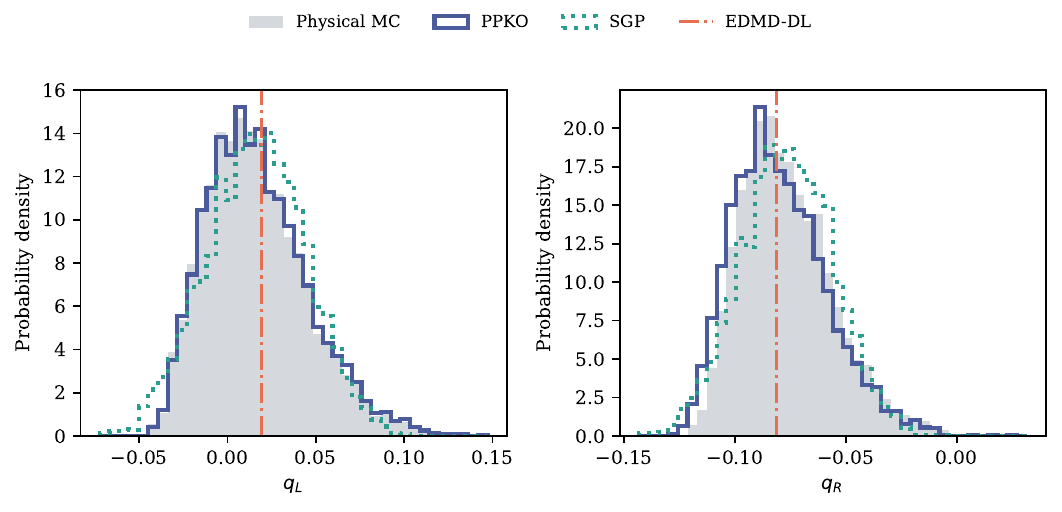}

    \vspace{-0.3cm}
    
    \caption{Regional-temperature distributions at $H=5$ ($t=0.5$), using $5{,}000$ matched parameter and disturbance realizations. Deterministic EDMD-DL predictions are shown as vertical lines.}
    \label{fig:electrothermal_open_loop_distributions}
\end{figure}

\subsubsection{Open-loop Performance}
Similar to the previous case study, to assess the accuracy of the
proposed PPKO model in open-loop prediction, the electrothermal
cooling system \eqref{eq:electrothermal_problem} was propagated
under prescribed cooling inputs over a horizon of $H=5$ steps
with $\Delta t=0.1$ from a representative initial condition.
We consider two linear outputs, $q_L$ and $q_R$, representing
the average temperatures over the left and right halves of
the spatial domain, respectively:
\begin{equation}
q_L(t)=2\int_0^{1/2}T(s,t)\,\mathrm{d}s,
\qquad
q_R(t)=2\int_{1/2}^{1}T(s,t)\,\mathrm{d}s.
\end{equation}
The PPKO mean and second moments of $q_L$ and $q_R$ were compared
to Monte~Carlo (MC) ground truth ($N_{\mathrm{mc}}=32{,}768$ samples)
and the baseline models in Fig.~\ref{fig:electrothermal_open_loop_moments}. The PPKO moments track the ground truth
moments well over the prediction horizon, but there is a small
deviation in $\mathbb{E}[q_R^2]$. This is consistent with
Remark~\ref{remark: structural bias}, which notes that structural
bias can arise in PPKO and universal approximation is not guaranteed.
Nonetheless, Fig.~\ref{fig:electrothermal_open_loop_distributions}
shows that the PPKO output distributions at $H=5$ approximately
match the MC histograms, which empirically demonstrates that PPKO approximates the stochastic dynamics well.

Fig.~\ref{fig:electrothermal_open_loop_moments} also shows that
EDMD-DL does not track the moment trajectories as well as the stochastic
baselines. While intrusive SGP accurately captures the moment dynamics,
its degree-one expansion shows minor discrepancies in the distributions
in Fig.~\ref{fig:electrothermal_open_loop_distributions}.
The expanded nonlinear PCE coefficient system has a total dimension
of 936, making open-loop inference and closed-loop optimization
more computationally expensive.
\subsubsection{Closed-loop Performance}
To avoid repetition, we omit definitions and implementation
details already introduced in the previous case study. The control objective in this case study is to regulate the temperature profile to zero, and the PPKO-SMPC controller
\eqref{eq:det_smpc_problem} using the PPKO model was implemented
in receding-horizon form with $H=5$ and $\Delta t=0.1$.
The state weighting matrices were
$Q=\operatorname{diag}(hI_{64},0_{8\times8})$ and $Q_f=6Q$,
where $h=1/65$ is the spatial discretization parameter, and the input penalty matrix was $R=0.05I_{10}$.
The cooling-input constraints are $0\leq u_{\ell,k}\leq2$.
For the stochastic controllers, both output raw second moments were constrained at every time step using the second-moment constraints in the SMPC problem \eqref{eq:ppko-2ndorderconstraints}. The second-moment constraints are $\mathbb{E}[q_L^2],\mathbb{E}[q_R^2] \leq 0.0121$ for time indices below 20, $\mathbb{E}[q_L^2],\mathbb{E}[q_R^2] \leq 0.006$ thereafter, and $\mathbb{E}[q_L^2],\mathbb{E}[q_R^2] \leq 0.011$ over indices 31--42 and 81--92 following the surges (see Fig.~\ref{fig:electrothermal_conditional_moments} for an illustration). To account for second-moment prediction errors, PPKO tightens the second-moment bounds by $0.0002$ for indices $\geq 20$. The condensed PPKO QCQP and EDMD-DL QP were both solved with CVXPY and CLARABEL.

% The audit used the common untightened bounds and fixed design prior, without injecting future surges; violation counts covered $180\times5\times2=1800$ update--lead--output checks.

To evaluate our PPKO framework against the other baselines, we performed the regulation task with the moment constraints on $q_L$ and $q_R$,
subject to sudden current surge events that added $0.8$ to all eight currents immediately before optimization at $k=30,80$ ($t=3,8$).
Fig.~\ref{fig:electrothermal_closed_thermal_rms} shows the thermal RMS
over 64 trajectories, where PPKO and the stochastic baselines achieve
smaller thermal RMS than deterministic EDMD-DL, especially during the periods following the current surges.

To evaluate each model's ability to enforce second-moment constraints, we propagate, at every control update of a representative trajectory, the controller's
input sequence through the physical model \eqref{eq:electrothermal_problem} using 2,048 Monte Carlo samples.
Fig.~\ref{fig:electrothermal_conditional_moments} shows the
Monte Carlo estimates of $\mathbb{E}[q_L^2]$ and
$\mathbb{E}[q_R^2]$ at the end of each controller's prediction horizon.
We observe that the PPKO framework and stochastic baselines achieve
comparable regulation with better adherence to the second-moment constraints
than deterministic EDMD-DL, illustrating the benefit of incorporating
stochastic dynamics in constraint enforcement.
Moreover, PPKO does not violate any physical second-moment constraints in this representative trajectory, while the stochastic baselines exhibit violations during the current surges. The EDMD-DL baseline also strongly violates the physical second-moment constraints during the current surges. Unlike the previous Duffing case study, this case study illustrates the importance of incorporating uncertainty propagation for moment-based constraints in SMPC regulation for stochastic systems.

Similar to the previous case study,
Table~\ref{tab:electrothermal_comparison} reports representative-trajectory metrics from a single closed-loop simulation.
% , whereas the thermal RMS figure aggregates the thermal RMS from 64 closed-loop simulation trajectories 
The table shows the thermal RMS, input RMS, solve times, and physical second-moment constraint
violations for PPKO and the baselines. We observe that PPKO achieves regulation
performance comparable to the stochastic baselines, with shorter solve
times than the intrusive SGP and scenario controllers, while exhibiting
no physical second-moment constraint violations. Under the common
CLARABEL solver, PPKO is slower than deterministic EDMD-DL because its
condensed problem includes second-moment constraints. The lower PPKO
solve time relative to the equation-based stochastic baselines illustrates
the computational benefit of offline condensation in this higher-dimensional case study.

\begin{table}[t]
\centering
\caption{Closed-loop performance metrics for a representative
trajectory of the electrothermal system.}
\label{tab:electrothermal_comparison}
\scriptsize
\setlength{\tabcolsep}{2.5pt}
\resizebox{\columnwidth}{!}{%
\begin{tabular}{@{}lcccc@{}}
\hline
Controller & Thermal RMS & Input RMS & Viol. & Mean solve [ms] \\
\hline
EDMD-DL & 0.0306 & 0.1364 & 97/1800 & 4.72 \\
Scenario SMPC & 0.0155 & 0.1367 & 23/1800  & 289.40 \\
SGP-SMPC      & 0.0155 & 0.1367 & 24/1800  & 72.31  \\
PPKO-SMPC & 0.0162 & 0.1372 & 0/1800 & 18.26 \\
\hline
\end{tabular}
}
\end{table}

\begin{figure}
    \centering
    
    \vspace{0.2cm}
    \includegraphics[width=\columnwidth]{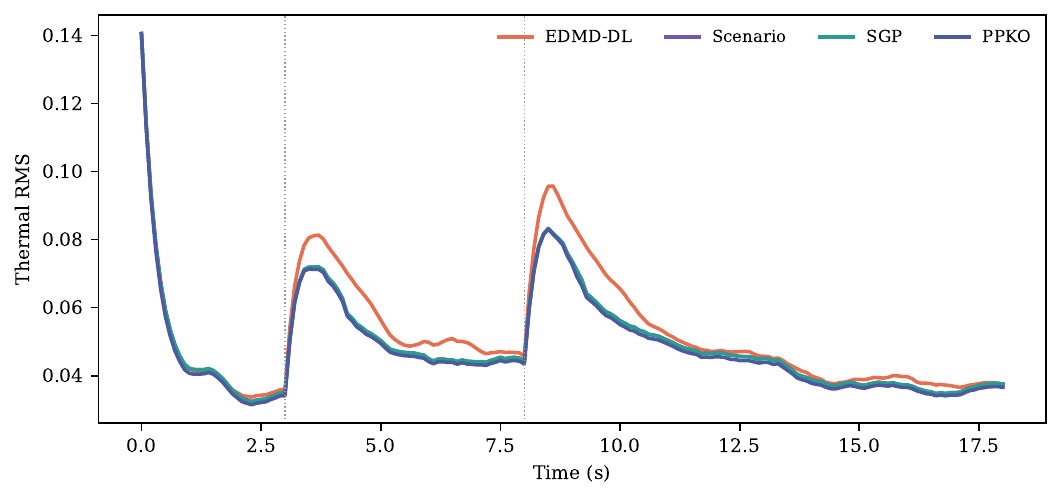}

    \vspace{-0.3cm}
    
    \caption{Closed-loop thermal RMS over 64 matched trajectories for PPKO and the baselines under current surges.}
    \label{fig:electrothermal_closed_thermal_rms}
\end{figure}

\begin{figure}
    \centering
    
    \vspace{0.2cm}
    \includegraphics[width=\columnwidth]{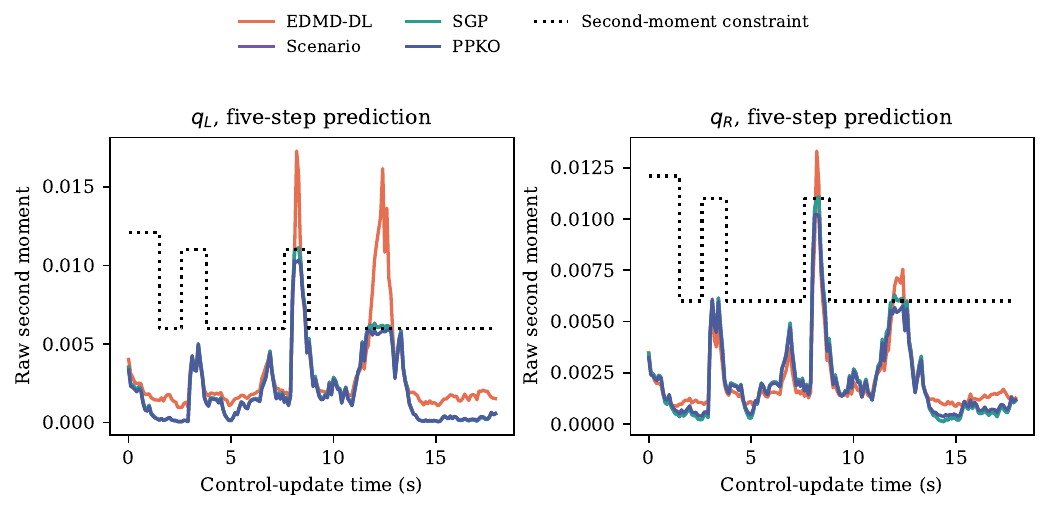}

    \vspace{-0.3cm}
    
    \caption{Physical Monte Carlo second moments of $q_L$ and $q_R$ at the prediction horizon under each optimized control sequence. The dotted curve denotes the second-moment constraint.}
    \label{fig:electrothermal_conditional_moments}
\end{figure}

\subsection{Discussion}
The two distinct numerical case studies demonstrate that our proposed PPKO-SMPC framework provides a favorable tradeoff between stochastic modeling accuracy and online computational efficiency. In both case studies, PPKO captures the first- and second-order statistics of the nonlinear stochastic dynamics with substantially better accuracy than the deterministic EDMD-DL model and with comparable accuracy to the intrusive SGP and Monte Carlo baselines. The closed-loop regulation simulations show that PPKO-SMPC achieves performance comparable to the stochastic baselines, with online computation substantially below the equation-based stochastic baselines, although above deterministic EDMD-DL for the electrothermal QCQP under the common solver. This computational advantage becomes particularly pronounced for the high-dimensional electrothermal PDE–ODE example, which illustrates the benefit of the condensed SMPC formulation as the number of state dimensions, lifted dimensions, and PCE basis functions increases. Furthermore, deterministic EDMD-DL may violate stochastic constraints during closed-loop operation, which motivates the use of SMPC formulations in stochastic settings. Overall, these results indicate that the proposed framework can incorporate PCE-based uncertainty propagation and Koopman operators into nonlinear SMPC without inheriting the curse of dimensionality typically associated with conventional stochastic formulations.

\section{Conclusion}
In summary, we developed the PPKO framework, integrating Koopman operator theory and PCEs for SMPC of nonlinear systems with parametric uncertainty and stochastic disturbances. We showed that PCE-parameterized Koopman matrices introduce no irreducible approximation gap, with the additional error vanishing as the PCE degree increases. We also developed a recursive offline condensation scheme that computes the required moment matrices and yields a deterministic SMPC problem whose online dimension depends only on the horizon and control dimension. The framework is validated on numerical examples spanning different dynamics and problem sizes.

Several important research directions remain beyond the scope of this article. These include analyses of recursive feasibility \cite{hewing2020recursively}, closed-loop stability \cite{cannon2009model}, adapting the parametric uncertainty distribution online \cite{zhang2020adaptive}, and handling temporally correlated disturbances, which are all directions that require further investigation. We deliberately focused here on developing the PPKO framework, including its approximation capabilities and condensed deterministic SMPC formulations, leaving these next theoretical developments as future work.

\begin{ack}
This research was supported by the U.S. Food and Drug Administration under the FDA BAA-22-00123 program, Award Number 75F40122C00200, and by the U.S. DOE, Office of Science, ASCR program under the Scientific Discovery through Advanced Computing (SciDAC) Institute ``LEADS: LEarning-Accelerated Domain Science.'' Efstathios Iliakis was further supported by the Onassis Foundation Scholarships' Program. Wallace Gian Yion Tan was further supported by the MathWorks Chemical Engineering Fellowship.
\end{ack}

\bibliographystyle{plain}
\bibliography{ref}

\end{document}